\documentclass[11pt,reqno]{amsart}
\usepackage{amssymb}
\usepackage{amsfonts}
\usepackage{amsmath}
\usepackage{stmaryrd}
\usepackage{physics}
\usepackage{braket}
\usepackage{dsfont}
\usepackage{bbold}
\usepackage{graphicx}
\usepackage{svg}
\usepackage{relsize}
\usepackage{makecell}
\usepackage[table,xcdraw]{xcolor}
\usepackage{enumerate}
\usepackage[pagebackref, colorlinks = true, linkcolor = blue, urlcolor  = blue, citecolor = red]{hyperref}
\usepackage[margin=1in]{geometry}
\usepackage{enumitem}
\usepackage{mathtools}
\usepackage{graphbox}
\usepackage{comment}
\usepackage{float}
\usepackage{hhline}
\usepackage[capitalize]{cleveref}
\usepackage{accents}

\newtheorem{theorem}{Theorem}[section]
\newtheorem{proposition}[theorem]{Proposition}
\newtheorem{corollary}[theorem]{Corollary}
\newtheorem{lemma}[theorem]{Lemma}

\newtheorem*{conjecture*}{Conjecture}

\newcommand\vertarrowbox[3][6ex]{
  \begin{array}[t]{@{}c@{}} #2 \\
  \left\uparrow\vcenter{\hrule height #1}\right.\kern-\nulldelimiterspace\\
  \makebox[0pt]{\scriptsize#3}
  \end{array}
}

\theoremstyle{definition}

\theoremstyle{definition}

\theoremstyle{definition}

\theoremstyle{definition}

\definecolor{darkgreen}{rgb}{0,0.392,0}

\newtheorem{innercustomgeneric}{\customgenericname}
\providecommand{\customgenericname}{}
\newcommand{\newcustomtheorem}[2]{
  \newenvironment{#1}[1]
  {
   \ifdefined\crefalias\crefalias{innercustomgeneric}{#2}\fi
   \renewcommand\customgenericname{#2}
   \renewcommand\theinnercustomgeneric{##1}
   \innercustomgeneric
  }
  {\endinnercustomgeneric}
  \ifdefined\crefname\crefname{#2}{#2}{#2s}\fi
}

\newcustomtheorem{customthm}{Theorem}
\newcustomtheorem{customlemma}{Lemma}

\newcommand{\A}{\mathcal{A}}

\newcommand{\Hi}{\mathcal{H}}

\newcommand{\K}{\mathcal{K}}

\newcommand{\Comp}{\mathbb{C}}
\newcommand{\re}{{\rm Re\,}}

\newcommand{\diag}{\mathrm{diag}}
\newcommand{\id}{\mathrm{id}}
\newcommand{\la}{\langle}
\newcommand{\ra}{\rangle}
\newcommand{\Om}{\Omega}
\newcommand{\om}{\omega}

\newcommand{\vect}{{\rm vec\,}}

\newcommand{\Fop}{\mathcal Q}

\title[2-distillability of Werner states]{A solution to 2-copy distillability of Werner states}
\author{Jinshi Fu}
\email{jinshi.fu@whu.edu.cn}
\author{Li Gao}
\email{gao.li@whu.edu.cn}
\author{Sang-Jun Park}
\email{sjpark@whu.edu.cn}
\address{School of Mathematics and Statistics, Wuhan University, 
430070, Wuhan, Hubei, China}

\begin{document}

\begin{abstract}
Entanglement distillation is a fundamental task in quantum information theory. In this work, we prove that Werner states in arbitrary dimension are $2$-copy distillable if and only if they are $1$-copy distillable. This answers the longstanding open question of the $2$-copy distillability of Werner states. This is an important step on determining whether every non-positive partial transpose (NPT) state is distillable, which remains one of the central open problems in the field of entanglement distillation. 
\end{abstract}

\maketitle

\section{Introduction}

Entanglement is a fundamental resource in quantum information theory and underlies many quantum information-processing tasks \cite{horodecki2009quantum}. In realistic settings, however, entanglement is inevitably degraded by noise and interactions with the environment. This makes it necessary to extract high-quality entanglement from imperfect quantum states and motivates the theory of \emph{entanglement distillation}: two distant parties use \emph{local operations and classical communication (LOCC)} to convert several copies of a mixed state into a smaller number of nearly maximally entangled pairs \cite{bennett1996purification,bennett1996mixed}.

A central problem in entanglement theory is to determine which bipartite states are distillable. The operational notion admits the following mathematical criterion \cite{HHH97,HHH98,DCLB00}: a bipartite state $\rho_{AB}$ is \emph{$r$-copy distillable} if and only if there exists a vector $\ket{\psi}_{A^rB^r}$ of Schmidt rank at most two such that
    $$\bra{\psi}(\rho_{AB}^{\Gamma})^{\otimes r}\ket{\psi}<0,$$
where $\rho_{AB}^{\Gamma}:=(\id_A\otimes \top_B)(\rho_{AB})$ denotes the partial transpose. A state is called \emph{distillable} if it is $r$-copy distillable for some finite $r$. Every state with positive partial transpose (PPT) is undistillable \cite{HHH98}, and hence every distillable state must have non-positive partial transpose (NPT). The converse question, whether every NPT state is distillable, is one of the longstanding open problems in quantum information theory. Following the discovery of PPT \emph{bound entangled} states \cite{Wor76,Cho82,Hor97,HHH98} (i.e., entangled but undistillable states), the works \cite{DCLB00,DVSS+00} raised the possibility that bound entanglement may also occur in the NPT regime. Despite more than two decades of investigation, this \emph{NPT distillation problem} remains open; see \cite{PPHH10,HRZ22,HPS25} and the references therein for the detailed history on the problem.

Among bipartite quantum states, the \emph{Werner states} \cite{Werner1989}
$$
    \rho_{\alpha}=\rho_{\alpha}^{(d)}
    :=\frac{I_{d^2}+\alpha F_d}{d^2+\alpha d},
    \qquad \alpha\in[-1,1],
$$
where $F_d$ is the flip operator on $\mathbb C^d\otimes\mathbb C^d$, play a distinguished role. By the reduction via the associated twirling arguments \cite{HH99,DCLB00,DVSS+00}, the general NPT distillation problem can be reduced to this one-parameter family: every NPT state is distillable if and only if every entangled Werner state is distillable. Understanding the dependence of the distillability of $\rho_\alpha$ on $\alpha$ is therefore a canonical test of the general NPT distillation problem.

The basic thresholds are known. First, the following three conditions are equivalent \cite{Werner1989,VW02}: $\rho_\alpha$ is PPT, $\rho_\alpha$ is separable, and $\alpha\in[-1/d,1]$. Moreover, $\rho_\alpha$ is 1-copy distillable if and only if $\alpha\in[-1,-1/2)$ \cite{Tom85,DCLB00}. Thus, for $d\geq3$, the unresolved Werner regime is the NPT but 1-copy-undistillable interval
$$
    -\frac12\leq \alpha< -\frac1d.
$$
The 2-copy question is the first genuinely collective case and has proved to be highly nontrivial. It has long been conjectured that the 2-copy threshold coincides with the 1-copy threshold. For $d=2$, the relevant interval is empty because $\rho_\alpha$ is already PPT for $\alpha\geq-1/2$. For $d=3$, robust semidefinite-programming and entanglement-witness methods give strong numerical evidence for the conjecture \cite{VD06}, with further algebraic reformulations and evidence in \cite{Djok16}. The case $d=4$ has led to several equivalent matrix-inequality formulations \cite{PPHH10} and remains an important test case; see \cite{PPHH10,QCCS21,LC26} for partial results. In arbitrary dimension, the strongest previously known exact results \cite{DCLB00,RW25} assert that $\rho_\alpha^{(d)}$ is 2-copy undistillable whenever 
    $$\alpha\geq-\max\left\{\frac14+\frac{1}{2d},\frac13\right\}.$$
We refer to \cite{EVWW01,CD11,PBHS13,Djok16,MHRW16,Ric25} for other approaches to the NPT distillation problem.

\medskip

In this paper, we settle the $2$-copy distillatibility problem for Werner states in every dimension $d\ge 2$. We prove that every $1$-copy-undistillable Werner state is also $2$-copy-undistillable, thereby obtaining the complete 2-copy threshold and resolving the longstanding question posed in \cite{DCLB00,PPHH10} and \cite[Problem~5]{HRZ22}.

\begin{theorem}\label{thm-main}
For every $d\geq2$, the Werner state $\rho_\alpha$ is $2$-copy undistillable if and only if
    $$-\frac12\leq\alpha\leq1.$$
In particular, it is NPT and $2$-copy undistillable whenever $\displaystyle -\frac12\leq\alpha< -\frac1d.$
\end{theorem}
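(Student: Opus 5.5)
The ``only if'' direction is immediate: if $\alpha<-1/2$ then $\rho_\alpha$ is $1$-copy distillable, so taking $\ket{\psi}=\ket{\phi}\otimes\ket{00}$ with $\ket{\phi}$ a $1$-copy witness gives $\bra{\psi}(\rho_\alpha^\Gamma)^{\otimes2}\ket{\psi}=\bra{\phi}\rho_\alpha^\Gamma\ket{\phi}\cdot\bra{00}\rho_\alpha^\Gamma\ket{00}<0$, because the second factor is a diagonal entry of a PPT-positive part: $\bra{00}(I+\alpha d P)\ket{00}=1+\alpha>0$. Here $P=\ket{\Omega}\bra{\Omega}$ is the projection onto the normalized maximally entangled vector, and up to a positive constant $\rho_\alpha^\Gamma\propto I+\alpha d P$. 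For the ``if'' direction, monotonicity in $\alpha$ is clear once we note that, for $\alpha\ge -1/d$, the state is PPT. We may therefore reduce to the critical value $\alpha=-1/2$. Indeed, the set of $\alpha$ for which $(I+\alpha dP)^{\otimes 2}$ is nonnegative on Schmidt rank $\le2$ vectors is an interval: writing $X_\alpha=I+\alpha d P$, the map $\alpha\mapsto X_\alpha^{\otimes2}$ is affine in each tensor factor, and I would check convexity directly on the expansion $I\otimes I+\alpha d(P\otimes I+I\otimes P)+\alpha^2d^2P\otimes P$.

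So the core claim is the following. Set $Q=I-\tfrac d2P$ on $\mathbb C^d\otimes\mathbb C^d$. Then $\bra{\psi}Q\otimes Q\ket{\psi}\ge0$ for every $\ket{\psi}\in(\mathbb C^d)^{\otimes2}_A\otimes(\mathbb C^d)^{\otimes2}_B$ of Schmidt rank at most two. I would write $\ket\psi=\vect(X)$ with $X=Y_1\otimes Z_1+\dots$, or more directly $\ket\psi=(A\otimes B)\ket{\Phi_2}$ with $A,B:\mathbb C^2\to\mathbb C^{d^2}$. Using $\bra{\vect M}P\ket{\vect M}=|\mathrm{tr} M|^2/d$, the quadratic form becomes a function of $K=(A^\top B)$ viewed as a $d^2\times d^2$ operator of rank $\le2$, reshaped as a map $M_d\to M_d$. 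Concretely, it equals $\|K\|_2^2-\tfrac12\sum(\text{partial trace terms})+\tfrac14|\mathrm{tr}_{\text{full}}K|^2$. Symmetrizing under $U\otimes\bar U$ twirls on each copy, I expect the problem to reduce to the inequality
$$\|K\|_2^2-\tfrac12\|\mathrm{tr}_1K\|_2^2-\tfrac12\|\mathrm{tr}_2 K\|_2^2+\tfrac14|\mathrm{tr}K|^2\ge0$$
for rank-two $K$ on $\mathbb C^d\otimes\mathbb C^d$, with the partial traces taken in the realigned sense.

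The main obstacle is proving this inequality. The difficulty is the rank-two constraint: for rank one, $K=x y^*$ and the inequality follows from $\|\mathrm{tr}_i(xy^*)\|_2\le\|x\|\|y\|$. My first attempt for rank two would be to write $K=K_1+K_2$ with rank-one pieces, chosen orthogonal via the singular value decomposition, and use the operator Schmidt decompositions of $x_j,y_j$. I would then look for a Cauchy--Schwarz/Schur-complement argument that exhibits the left side as a positive semidefinite $2\times2$ Gram-type form plus a sum of squares. If that fails, I would reduce to $d\le4$: a rank-two $K$ is supported on at most a $2\oplus2$-dimensional local subspace per copy, hence lives in $\mathbb C^4$. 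The problem then becomes dimension-free, and I would try to verify the remaining case by explicit diagonalization of the associated Hermitian form.
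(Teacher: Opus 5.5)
\paragraph{Verdict.} There is a genuine gap: the proposal identifies the endpoint inequality, which is the whole content of the theorem, but does not prove it. Your rewriting
\[ \|K\|_2^2-\tfrac12\|\mathrm{tr}_1K\|_2^2-\tfrac12\|\mathrm{tr}_2K\|_2^2+\tfrac14|\mathrm{tr}\,K|^2\ge0,\qquad \operatorname{rank}K\le2, \]
is fine. It is an exact identity for $\psi=\mathrm{vec}\,K$ across the $A_1A_2{:}B_1B_2$ cut, with ordinary partial traces of $K\in M_d\otimes M_d$ after identifying $B_i$ with $A_i$. No twirling is needed, and twirling a pure test vector would not keep it pure anyway. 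It is equivalent to Lemma~\ref{lem:endpoint-reduction}(2). But after that you only list attempts, and none of them works as stated:
\begin{itemize}
\item Splitting $K=s_1x_1y_1^*+s_2x_2y_2^*$ by the SVD and applying the rank-one bound termwise yields only $\|K\|_2^2-\frac12\|\mathrm{tr}_1K\|_2^2-\frac12\|\mathrm{tr}_2K\|_2^2\ge s_1^2+s_2^2-(s_1+s_2)^2=-2s_1s_2$. All the content sits in the cross terms $\langle\mathrm{tr}_i(x_1y_1^*),\mathrm{tr}_i(x_2y_2^*)\rangle$, which orthogonality of the pieces does not control.
\item The fallback reduction to $d\le4$ is false. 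The range of $K$ is a two-dimensional subspace of $\mathbb C^d\otimes\mathbb C^d$, but a single vector in it (e.g.\ the maximally entangled one) already has full local support $\mathbb C^d$ on each factor. So nothing confines the local supports to small dimension, and $d=4$ was itself a well-known open case.
\item ``Explicit diagonalization'' cannot decide the question. The Hermitian form is indefinite on all of $M_{d^2}$, since $I-\frac d2P$ has eigenvalue $1-\frac d2<0$ for $d\ge3$. What is asked is positivity on the nonconvex set $\{\operatorname{rank}K\le2\}$, which is an $S(2)$-norm problem.
\end{itemize}

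\paragraph{The missing idea.} You need a structure under which your ``Schur complement plus Cauchy--Schwarz'' idea actually closes. The paper does the following.
\begin{itemize}
\item Write $\psi=\mathrm{vec}V_1\otimes w_1+\mathrm{vec}V_2\otimes w_2$ with $V_1,V_2$ Hilbert--Schmidt orthonormal. The endpoint then becomes $S_V^*S_V\le2I+P_V$ on $M_d\oplus M_d$.
\item Split along $\mathbb CV\oplus V^\perp$ and identify the corner, off-diagonal and transverse Schur blocks via $\mathcal Q_V=(\Pi_{\mathcal A}^{(13)}\otimes\Pi_{\mathcal A}^{(24)})(U_V\otimes U_V)|_{\mathbb C^2\vee\mathbb C^2}$. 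Specifically, $3-h_V=2\|\mathcal Q_V\|_{\mathrm{HS}}^2$, $|\langle g_V,W\rangle|=\sqrt2\,|\langle\mathcal Q_V,D\mathcal Q_V[W]\rangle_{\mathrm{HS}}|$, and $\|D\mathcal Q_V[W]\|_{\mathrm{HS}}^2\le\langle W,(2I-K_V)W\rangle$ for $W\perp V$.
\item Cauchy--Schwarz in Hilbert--Schmidt then gives $|g_V\rangle\langle g_V|\le(3-h_V)(2I-K_V)$.
\end{itemize}
The last block bound is where the new input enters. It is the sharp estimate $\langle\phi|\Pi_{\mathcal A}^{(13)}\otimes\Pi_{\mathcal A}^{(24)}|\phi\rangle\le\frac12\|\phi\|^2$ for Schmidt rank $\le2$ (Theorem~\ref{thm-AntisymmProj-S2norm}), applied to $\mathrm{vec}V_1\otimes\mathrm{vec}W_2-\mathrm{vec}V_2\otimes\mathrm{vec}W_1$. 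That estimate is proved via symmetric (Takagi) Schmidt decompositions of vectors in the range of the projection and a Schatten-$4$ H\"older bound.

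\paragraph{Two smaller points.}
\begin{itemize}
\item Your reduction to $\alpha=-\frac12$ argues in the wrong direction. The map $\alpha\mapsto\langle\psi|X_\alpha^{\otimes2}|\psi\rangle$ has nonnegative $\alpha^2$-coefficient, so it is convex, and the set where a convex function is nonnegative need not be an interval. The correct route is the bi-affine one you allude to: minimize $(\alpha,\beta)\mapsto\langle\psi|X_\alpha\otimes X_\beta|\psi\rangle$ at a vertex of $[-\frac12,1]^2$. The mixed vertices need the $1$-copy undistillability of $\rho_{-1/2}$ and the separability of $X_1$; alternatively, cite the convex-combination lemma the paper uses.
\item In the ``only if'' part, $\langle00|X_\alpha|00\rangle=1+\alpha$ vanishes at $\alpha=-1$. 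Use $|01\rangle$ instead, for which the factor equals $1$.
\end{itemize}
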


Theorem~\ref{thm-main} shows that, within the Werner family, access to a second copy never unlocks distillability: the 1-copy and 2-copy thresholds coincide exactly. Consequently, if a Werner state in the interval $-\frac12\leq\alpha<-\frac1d$ is distillable at all, then at least three copies are required. Thus our theorem resolves the first nontrivial finite-copy level of the canonical NPT distillation problem, while leaving the higher-copy and asymptotic questions open.

Our proof is based on a sharp geometric estimate for symmetric and antisymmetric tensor subspaces. Let
$$
    Q:=\Pi_{\mathcal A}^{(A_1B_1)}\otimes\Pi_{\mathcal A}^{(A_2B_2)}
$$
be the projection onto the tensor product of the two antisymmetric subspaces. We first show the dimension-independent optimization (Theorem \ref{thm-AntisymmProj-S2norm})
$$
    \sup_{\substack{\|\psi\|=1\\
    \operatorname{SR}_{A_1A_2:B_1B_2}(\psi)\leq2}}
    \langle\psi|Q|\psi\rangle=\frac12
$$
by applying the ideas in \cite{PPHH10,JK10}.
This use of symmetric-subspace structure is closely related to the general role of permutation symmetry in quantum information theory \cite{harrow2013church}.

To prove Theorem~\ref{thm-main}, the only new undistillability statement that must be established is the \emph{boundary case} $\alpha=-\frac12$. The proof consists of two additional steps. First, after vectorizing a Schmidt-rank-two test vector
    $$|\psi\ra = |\vect V_1 \ra_{A_1A_2}|\vect W_2\ra_{B_1B_2} + |\vect V_2\ra_{A_1A_2}|\vect W_2\ra_{B_1B_2},$$
the undistillability condition at $\alpha=-1/2$ becomes a sharp block-operator inequality associated with a Hilbert--Schmidt orthonormal pair of matrices $V=(V_1,V_2)$ (Lemma \ref{lem:endpoint-reduction}). Then we relate each block in the operator inequality involving the above projection $Q$ in geometric perspective. Specifically, it turns out that the zeroth, first, and second-order variations of the Hilbert-Schmidt norm of the restricted projection
    $$Q\big|_{\Hi^{(A_1A_2)}\vee \Hi^{(B_1B_2)}}, \qquad \Hi={\rm span}\{|\vect V_1\ra, |\vect V_2\ra\}$$
provides the surprising connection between these blocks (Lemmas \ref{lem:projection-kernel}, \ref{lem:first-variation}, and \ref{cor-DQV-K}), where the advertised sharp projection estimate allows us to conclude the proof.

\medskip

The remainder of the paper is organized as follows. Section~\ref{sec-symm-antisymm} recall the necessary preliminary on Schmidt decompositions and symmetric and antisymmetric subspaces, and proves the sharp projection estimate in Lemma~\ref{lem-AntisymmProj} and Theorem~\ref{thm-AntisymmProj-S2norm}. Section~\ref{sec-distillability} reformulates the endpoint problem as a block-operator inequality and completes the proof through the corresponding reduction identity of zeroth-, first-, and second-level variation. The final section briefly discusses the remaining higher-copy problem.

\section{Symmetric and antisymmetric subspaces} \label{sec-symm-antisymm}

Let $\mathcal{H}_A$ and $\mathcal{H}_B$ be finite-dimensional Hilbert spaces and let $n:=\min(\dim \mathcal{H}_A,\dim \mathcal{H}_B)$. For any bipartite pure state $|\psi\rangle\in \mathcal{H}_A\otimes \mathcal{H}_B$ (i.e., a vector $\psi$ with $\|\psi\|=1$), it admits a \emph{Schmidt decomposition} \cite{nielsen2010quantum}
$$
|\psi\rangle=\sum_{i=1}^{n}s_i|v_i\rangle\otimes|w_i\rangle,
$$
where ${|v_i\rangle}$ and ${|w_i\rangle}$ are orthonormal families in $\mathcal{H}_A$ and $\mathcal{H}_B$, respectively, $s_1\geq s_2\geq \cdots \geq s_n\geq 0$, and $\sum_i s_i^2=1$. The numbers $s_1,\ldots, s_n$ are called the \emph{Schmidt coefficients}, and the largest number $r$ such that $s_r>0$ is called the \emph{Schmidt rank} of $|\psi\rangle$. In particular, a pure state is \emph{separable} if and only if its Schmidt rank is one. More generally, a mixed state $\rho$ is separable if it can be written as a convex combination of product states,
$$
\rho=\sum_j p_j\rho_j^A\otimes\rho_j^B,
\qquad
p_j\geq 0,\quad \sum_jp_j=1.
$$

Let $\Hi$ be a finite-dimensional Hilbert space, and let us denote by
$F_{\Hi}:\Hi\otimes \Hi\to \Hi\otimes \Hi$ the \emph{flip operator}, defined on elementary
tensors by
$$
    F_{\Hi}(\ket{\xi}\otimes \ket{\eta})=\ket{\eta}\otimes \ket{\xi},
    \qquad \ket{\xi},\ket{\eta}\in \Hi.
$$
$F_{\Hi}$ is a Hermitian unitary, hence it has two eigenvalues, $1$ and $-1$. Its spectral projections are called symmetric and antisymmetric projections on $\Hi \otimes \Hi$, defined as
$$
    \Pi_{\mathcal S}:=\frac{I_{\Hi\otimes \Hi}+F_{\Hi}}{2},
    \qquad
    \Pi_{\mathcal A}:=\frac{I_{\Hi\otimes \Hi}-F_{\Hi}}{2},
$$
whose ranges are the eigenspaces
$$
    \Hi \vee \Hi:=\{ \ket{\psi}\in \Hi \otimes \Hi \ |\ F_{\Hi} \ket{\psi}=\ket{\psi} \},
    \qquad
     \Hi \wedge \Hi:=\{ \ket{\psi}\in \Hi\otimes \Hi \ |\ F_{\Hi} \ket{\psi}=-\ket{\psi} \},
$$
called \emph{symmetric subspace} and \emph{anti-symmetric subspace} of $H\otimes H$, respectively. Given an orthonormal basis
$\{v_1,\ldots, v_n\}$ of $\Hi \cong \mathbb{C}^n$, the flip operator can be explicitly written as
$$F_{\Hi}=\sum_{i,j=1}^n \ket{v_i}\bra{v_j}\otimes \ket{v_j}\bra{v_i}.$$
Note that the above expression is independent of the choice of the orthonormal basis.
Furthermore, the two sets
\begin{align*}
    &\{\ket{v_i}\otimes \ket{v_i}:1\leq i\leq n\} \cup
    \left\{\frac{\ket{v_i} \otimes \ket{v_j}+\ket{v_j} \otimes \ket{v_i}}{\sqrt 2}:1\leq i<j\leq n\right\}, \\
    &\left\{\frac{\ket{v_i} \otimes \ket{v_j}-\ket{v_j} \otimes \ket{v_i}}{\sqrt 2}:1\leq i<j\leq n\right\}
\end{align*}
are orthonormal bases of $\Hi \vee \Hi$ and $\Hi \wedge \Hi$, respectively. 

For simplicity, we write $F_n:=F_{\Comp^n} = \sum_{i,j=1}^n |i\ra\la j|\otimes | j\ra \la i| \in M_n(\Comp)^{\otimes 2}$. 
Furthermore, we identify the space $\Comp^n\otimes \Comp^n$ with $M_n(\Comp)$ through the convention
$$
    \ket{\vect X}:=\sum_{i,j=1}^n X_{ij}\ket{ij}=(X\otimes I_n)\ket{\Omega_n}, \quad X=(X_{ij})_{1\leq i,j\leq n}\in M_n(\Comp),
$$
where $\ket{\Omega_n}:=\sum_{i=1}^n \ket{i}\otimes\ket{i}=\ket{\vect I}$ denotes the unnormalized \emph{maximally entangled state}. For rank-one matrices, we have $|\vect (|\psi\ra\la \varphi|)\ra:= |\psi\ra\otimes |\overline{\varphi}\ra$ where $|\psi\ra\la \varphi|:= \psi\varphi^*$ and $\overline{\varphi}$ denotes the entrywise complex conjugate vector of $\varphi$, and extended linearly.

\medskip

Let us first recall the following property of the Schmidt decomposition of symmetric vectors.

\begin{proposition} \label{prop-Schmidt-Symm}
For any symmetric vector $\ket{\psi}\in \Hi\vee \Hi$, there are
nonnegative numbers $s_1\geq s_2\geq \cdots \geq s_n\geq 0$ and an
orthonormal basis $\{\ket{v_1},\ldots,\ket{v_n}\}$ of $\Hi$ such that
\begin{equation} \label{eq-Schmidt-Symm}
    \ket{\psi} = \sum_{i=1}^n s_i \ket{v_i}\otimes \ket{v_i}.
\end{equation}
\end{proposition}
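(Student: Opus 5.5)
This is the Autonne--Takagi factorization of complex symmetric matrices, rewritten in tensor language. I would fix an orthonormal basis $\{\ket{e_1},\ldots,\ket{e_n}\}$ of $\Hi$ and identify $\Hi\otimes\Hi$ with $M_n(\Comp)$ through $\ket{\psi}=\sum_{i,j}X_{ij}\ket{e_i}\otimes\ket{e_j}=\ket{\vect X}$. Under this identification the flip acts by $F_{\Hi}\ket{\vect X}=\ket{\vect X^{\top}}$. So $\ket{\psi}\in\Hi\vee\Hi$ exactly when $X$ is a complex symmetric matrix, $X=X^{\top}$.

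A vector of the form \eqref{eq-Schmidt-Symm}, with $\ket{v_i}=U\ket{e_i}$ for a unitary $U$, corresponds to the matrix $X=U\,\diag(s_1,\ldots,s_n)\,U^{\top}$. The statement is therefore equivalent to the following: every complex symmetric $X$ factors as $X=U\Sigma U^{\top}$ with $U$ unitary and $\Sigma\geq0$ diagonal with decreasing entries. The entries of $\Sigma$ are then necessarily the singular values of $X$, i.e.\ the Schmidt coefficients of $\psi$.

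To prove the factorization I would argue as follows.
\begin{enumerate}
\item Consider the map $v\mapsto X\overline{v}$. It is antilinear, and the matrix $X\overline{X}=XX^{*}$ is positive semidefinite. I choose a unit vector $v_1$ that is an eigenvector of $XX^*$ for its largest eigenvalue $s_1^2$.
\item Set $w=X\overline{v_1}$. Then $X\overline{w}=X\overline{X}v_1=s_1^2v_1$.
\begin{itemize}
\item If $w$ and $v_1$ are linearly dependent, then $X\overline{v_1}=\lambda v_1$, and replacing $v_1$ by $e^{i\theta}v_1$ makes $\lambda=|\lambda|$.
\item Otherwise, $u=v_1+s_1^{-1}w$ satisfies $X\overline{u}=s_1u$ (when $s_1>0$), and I take $v_1=u/\|u\|$.
\end{itemize}
Either way I obtain a unit $v_1$ with $X\overline{v_1}=s_1v_1$. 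If $s_1=0$ then $X=0$ and there is nothing to prove.
\item Use symmetry to split off this direction. For $x\perp v_1$,
\[
\langle v_1, X\overline{x}\rangle = v_1^*X\overline{x}=\overline{x}^{\top}X^{\top}\overline{v_1}=\overline{x}^{\top}X\overline{v_1}=s_1\langle x,v_1\rangle=0 .
\]
Hence the antilinear map $x\mapsto X\overline{x}$ preserves $v_1^{\perp}$. In tensor terms, $\ket{\psi}=s_1\ket{v_1}\ket{v_1}+\ket{\psi'}$ with $\ket{\psi'}\in v_1^{\perp}\vee v_1^{\perp}$.
\item Induct on $n$ to decompose $\ket{\psi'}$. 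Its coefficients are automatically at most $s_1$, because $s_1^2$ was the largest eigenvalue of $XX^*$.
\end{enumerate}

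The main obstacle is step 2, producing a genuine fixed vector of the antilinear map with a real nonnegative coefficient; the phase and linear-combination trick handles it. Everything else is the orthogonality computation in step 3, which uses $X=X^{\top}$ essentially, and routine induction.
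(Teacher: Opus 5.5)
Your proof is correct and takes the same route as the paper: identify $\psi$ with $\ket{\vect X}$, note that symmetry means $X=X^{\top}$, and reduce to the Takagi factorization $X=U\,\diag(s_1,\ldots,s_n)\,U^{\top}$. The only difference is that the paper cites Takagi's theorem (Horn--Johnson, Thm.~4.4.16) and reorders columns, while you also prove it via a fixed vector of the antilinear map $x\mapsto X\overline{x}$ plus induction, and that argument checks out (in your dependent case, $|\lambda|=s_1$ follows from $|\lambda|^2v_1=X\overline{w}=s_1^2v_1$, which you should state).
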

\begin{proof}
We may identify $\Hi\cong \Comp^n$ and write $\ket{\psi}=\ket{\vect X}$ for some matrix $X\in M_n(\Comp)$. Then the condition
$F_n\ket{\psi}=\ket{\psi}$ is exactly the condition $X^{\top}=X$. By
Takagi's factorization for complex symmetric matrices (see, e.g., {\cite[Theorem 4.4.16]{Horn2012matrix}}), there exist a unitary
matrix $U=(v_1\ \cdots\ v_n)$ and a diagonal matrix
$D=\diag(s_1,\ldots,s_n)$ with $s_i\geq 0$ such that
    $$X=UDU^{\top}=\sum_{i=1}^n s_i v_i v_i^{\top} = \sum_{i=1}^n s_i |v_i\ra \la \overline{v_i}|.$$
Vectorizing this identity gives
\eqref{eq-Schmidt-Symm}. Reordering the columns of $U$ if necessary gives
$s_1\geq \cdots\geq s_n$.
\end{proof}

We shall also use the same notation for Hilbert \emph{subspaces}. Let us first label the Hilbert spaces
    $$\Hi_{A_1}=\Hi_{A_2}=\Hi_{B_1}=\Hi_{B_2}=\Comp^d, \qquad \Hi_{A_1A_2} \otimes \Hi_{B_1B_2} = (\Comp^d)^{\otimes 2}_{12} \otimes (\Comp^d)^{\otimes 2}_{34}.$$
We are interested in the case
$\Hi \subseteq \Comp^d\otimes \Comp^d\cong \Comp^{d^2}$, and then $\Hi ^{(12)}\vee \Hi^{(34)}$ denotes the
symmetric subspace of $\Hi^{(12)}\otimes \Hi^{(34)}\subseteq (\Comp^d\otimes \Comp^d) \otimes (\Comp^d\otimes \Comp^d)$ with respect to the flip operator
$$
    F_{d^2}^{(12:34)}=F^{(13)}F^{(24)}\in M_d(\Comp)^{\otimes 4},
$$
between the two copies $12\leftrightarrow 34$. 
Here $F^{(13)}=F_d^{(13)}\otimes I_{d^2}^{(24)}$ is the flip operator between the first and third $\Comp^d$
factors, and $F^{(24)}= I_{d^2}^{(13)} \otimes F_d^{(24)}$ is the flip operator between the second and fourth
factors. In this case, the flip operator $F_{\Hi}$ on $\Hi^{(12)}\otimes \Hi^{(34)}$ simply corresponds to the restriction $F_{\Hi}=F_{d^2}^{(12:34)}\big|_{\Hi \otimes \Hi}$. 

\medskip

Let us begin with a lemma which would be useful throughout this paper.

\begin{lemma}\label{lemma:4trace}
For all matrices $X_1, X_2, Y_1, Y_2\in M_d(\Comp)$, one has
\begin{align}
    (\bra{\vect X_1}_{12}\otimes \bra{\vect X_2}_{34})F^{(13)}
    (\ket{\vect Y_1}_{12}\otimes \ket{\vect Y_2}_{34})=\Tr(X_1^*Y_2X_2^*Y_1), \label{eq-4trace1}\\
    (\bra{\vect X_1}_{12}\otimes \bra{\vect X_2}_{34})F^{(24)}
    (\ket{\vect Y_1}_{12}\otimes \ket{\vect Y_2}_{34})=\Tr(X_1^*Y_1X_2^*Y_2), \label{eq-4trace2}
\end{align}
\end{lemma}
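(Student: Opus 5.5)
The plan is to verify both identities by direct index computation in the computational basis, using the vectorization convention $\ket{\vect X}=\sum_{i,j}X_{ij}\ket{ij}$. We expand
$$
\ket{\vect Y_1}_{12}\otimes\ket{\vect Y_2}_{34}=\sum_{i,j,k,l}(Y_1)_{ij}(Y_2)_{kl}\ket{i}_1\ket{j}_2\ket{k}_3\ket{l}_4 .
$$
Then we apply the flip. The operator $F^{(13)}$ exchanges the labels in slots $1$ and $3$, producing $\sum (Y_1)_{ij}(Y_2)_{kl}\ket{k}_1\ket{j}_2\ket{i}_3\ket{l}_4$. Likewise, $F^{(24)}$ produces $\sum (Y_1)_{ij}(Y_2)_{kl}\ket{i}_1\ket{l}_2\ket{k}_3\ket{j}_4$.

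Next, we pair with the bra $\bra{\vect X_1}_{12}\otimes\bra{\vect X_2}_{34}=\sum \overline{(X_1)_{ab}}\,\overline{(X_2)_{cd}}\bra{abcd}$. For \eqref{eq-4trace1} this gives
$$
\sum_{i,j,k,l}\overline{(X_1)_{kj}}\,\overline{(X_2)_{il}}\,(Y_1)_{ij}(Y_2)_{kl}.
$$
Writing $\overline{(X_1)_{kj}}=(X_1^*)_{jk}$ and $\overline{(X_2)_{il}}=(X_2^*)_{li}$, the summand becomes $(X_1^*)_{jk}(Y_2)_{kl}(X_2^*)_{li}(Y_1)_{ij}$. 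This is a closed cycle $j\to k\to l\to i\to j$, so the sum equals $\Tr(X_1^*Y_2X_2^*Y_1)$.

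For \eqref{eq-4trace2} the pairing gives
$$
\sum_{i,j,k,l}\overline{(X_1)_{il}}\,\overline{(X_2)_{kj}}\,(Y_1)_{ij}(Y_2)_{kl}=\sum_{i,j,k,l}(X_1^*)_{li}(Y_1)_{ij}(X_2^*)_{jk}(Y_2)_{kl}.
$$
This is the cycle $l\to i\to j\to k\to l$, so the sum equals $\Tr(X_1^*Y_1X_2^*Y_2)$.

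There is no real obstacle here; the only care needed is bookkeeping. We must place the conjugate-transpose indices in the right order so that the index chain closes into a single trace, and we must confirm that both sides are conjugate-linear in $X_1,X_2$ and linear in $Y_1,Y_2$.

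As an alternative, one could first check the identities on rank-one matrices $X=\ket{x}\bra{\bar x'}$ and then extend by sesquilinearity. The index computation above is more direct, so we will use it.
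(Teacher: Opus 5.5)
\textbf{Verdict: your proof is correct; it reaches the result by a different route than the paper.} The slot assignments after applying $F^{(13)}$ and $F^{(24)}$ are right. So are the pairings with $\bra{\vect X_1}_{12}\otimes\bra{\vect X_2}_{34}$ and the rewriting of conjugated entries as entries of $X_1^*,X_2^*$. Each sum closes into the stated single trace.

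\textbf{How the paper argues instead.} It writes $\ket{\vect X}=(X\otimes I_d)\ket{\Omega_d}$ and regroups $\ket{\Omega_d}_{12}\otimes\ket{\Omega_d}_{34}$ as a maximally entangled vector $\ket{\Omega_{d^2}}$ across the cut $13:24$. This reduces the left side of \eqref{eq-4trace1} to $\Tr\big((X_1^*\otimes X_2^*)F_d(Y_1\otimes Y_2)\big)$. It then finishes with the swap trick $\Tr((P\otimes Q)F_d)=\Tr(PQ)$, and adds a tensor-diagram picture. That route is shorter, essentially coordinate-free, and shows structurally why a single cyclic trace appears.

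\textbf{What each route buys.} In the paper's setup the matrices sit on slots $1,3$ while $F^{(24)}$ acts on slots $2,4$. So its ``analogous'' second identity actually needs one more ingredient. One option is $\bra{\Omega}A\otimes B\ket{\Omega}=\Tr(AB^{\top})$ together with $F_d^{\top}=F_d$. Another is to write $F^{(24)}=F^{(13)}\,(F^{(13)}F^{(24)})$, where $F^{(13)}F^{(24)}$ swaps the two copies $12\leftrightarrow 34$; then \eqref{eq-4trace2} follows from \eqref{eq-4trace1} with $Y_1,Y_2$ interchanged. Your coordinate computation treats both identities on exactly the same footing, at the cost of explicit index bookkeeping.

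\textbf{Minor point.} Your remark that you must check conjugate-linearity in $X_1,X_2$ and linearity in $Y_1,Y_2$ is only needed for the rank-one alternative. The index computation already covers arbitrary matrices directly.
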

\begin{proof}
Recall that $|\vect X_i\ra = (X_i\otimes I_d)|\Om_d\ra$ and $|\vect Y_i\ra = (Y_i\otimes I_d)|\Om_d\ra$ ($i=1,2$), and we can naturally identify
    $$|\Om_d^{(12)}\ra \otimes |\Om_d^{(34)}\ra =  \sum_{i,j=1}^d |ii\ra_{12}\otimes |jj\ra_{34} \cong  \sum_{i,j=1}^{d}|ij\ra_{13}\otimes |ij\ra_{24} = |\Om_{d^2}^{(13:24)}\ra .$$
Then one has
\begin{align*}
    &(\bra{\vect X_1}_{12}\otimes \bra{\vect X_2}_{34})F^{(13)}
    (\ket{\vect Y_1}_{12}\otimes \ket{\vect Y_2}_{34})\\
    &= \big\la \Om_d^{(12)}\otimes \Om_d^{(34)} \big| (X_1^{*(1)}\otimes X_2^{*(3)}\otimes I_{d^2}^{(24)})(F_d^{(13)}\otimes I_{d^2}^{(24)}) (Y_1^{(1)}\otimes Y_2^{(3)}\otimes I_{d^2}^{(24)}) \big| \Om_d^{(12)}\otimes \Om_d^{(34)} \big\ra\\
    &= \big \la \Om_{d^2}^{(13:24)} \big|(X_1^{*(1)}\otimes X_2^{*(3)})F_d^{(13)}(Y_1^{(1)}\otimes Y_2^{(3)}) \otimes I_{d^2}^{(24)} \big|\Om_{d^2}^{(13:24)} \big\ra\\
    &= \Tr \big((X_1^*\otimes X_2^*)F_d(Y_1\otimes Y_2)\big)= \Tr \big((Y_1X_1^*\otimes Y_2X_2^*)F_d\big)\\
    &= \Tr(Y_1X_1^*Y_2X_2^*) = \Tr(X_1^*Y_2X_2^*Y_1)
\end{align*}
which shows Eq. \eqref{eq-4trace1}, where we further applied the two identities
    $$\la \Om_n|Z\otimes I_n|\Om_n\ra = \Tr(Z), \quad \Tr((P\otimes Q)F_d)=\Tr(PQ)$$
for $Z\in M_n(\Comp)$ and $P,Q\in M_d(\Comp)$.
The proof of Eq. \eqref{eq-4trace2} is analogous. We also refer to Figure \ref{fig-VecId} for a graphical proof of Eq. \eqref{eq-4trace1} using the tensor diagram notation from \cite{collins2010random,collins2016random}.
\end{proof}

\begin{figure}[htbp]
    \centering

    \includegraphics[width=1\linewidth]{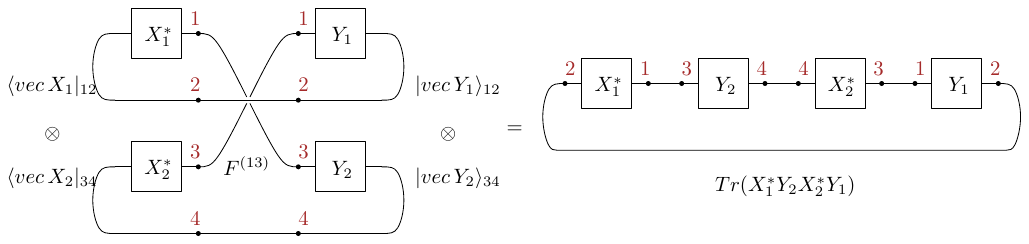} 
    
    \caption{A graphical proof of Eq. \eqref{eq-4trace1}}
    \label{fig-VecId}
\end{figure}

\begin{lemma} \label{lem-AntisymmProj}
Let $\Hi \subseteq \Comp^d\otimes \Comp^d$ be a subspace with $\dim(\Hi)\le 2$.
Then, for every symmetric vector
$\ket{\psi}\in \Hi^{(12)}\vee \Hi^{(34)}\subseteq (\Comp^d)^{\otimes 4}$, one has
\begin{equation} \label{eq-AntisymmIneq1}
    \| (\Pi_{\A}^{(13)}\otimes \Pi_{\A}^{(24)})\ket{\psi}\|^2
    \leq \frac{1}{2}\|\psi\|^2.
\end{equation}
\end{lemma}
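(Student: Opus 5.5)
The plan is to combine the symmetric Schmidt form of Proposition~\ref{prop-Schmidt-Symm} with the trace formula of Lemma~\ref{lemma:4trace}. This reduces \eqref{eq-AntisymmIneq1} to the inequality $\langle\psi|F^{(13)}|\psi\rangle\ge 0$, which I then prove by Cauchy--Schwarz and AM--GM.

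\textbf{Step 1 (Schmidt form).} Since $\ket{\psi}\in\Hi^{(12)}\vee\Hi^{(34)}$ and $F_{\Hi}$ is the restriction of $F_{d^2}^{(12:34)}$, apply Proposition~\ref{prop-Schmidt-Symm} to $\Hi$ with $\dim\Hi\le 2$. This gives an orthonormal basis $\{v_1,v_2\}$ of $\Hi$ (pad if $\dim\Hi=1$) and $s_1,s_2\ge 0$ with
\[
\ket{\psi}=s_1\ket{v_1}\otimes\ket{v_1}+s_2\ket{v_2}\otimes\ket{v_2},\qquad \|\psi\|^2=s_1^2+s_2^2 .
\]
Write $\ket{v_i}=\ket{\vect V_i}$, where $V_1,V_2\in M_d(\Comp)$ are Hilbert--Schmidt orthonormal.

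\textbf{Step 2 (reduction to one flip).} Expand
\[
\Pi_{\A}^{(13)}\otimes\Pi_{\A}^{(24)}=\tfrac14\bigl(I-F^{(13)}-F^{(24)}+F^{(13)}F^{(24)}\bigr).
\]
The left side of \eqref{eq-AntisymmIneq1} equals $\langle\psi|\Pi_{\A}^{(13)}\otimes\Pi_{\A}^{(24)}|\psi\rangle$, because this operator is a projection. Since $F^{(13)}F^{(24)}=F_{d^2}^{(12:34)}$ fixes $\psi$, we get $F^{(24)}\psi=F^{(13)}\psi$. Hence
\[
\|(\Pi_{\A}^{(13)}\otimes\Pi_{\A}^{(24)})\psi\|^2=\tfrac12\bigl(\|\psi\|^2-\langle\psi|F^{(13)}|\psi\rangle\bigr).
\]
So it suffices to show $\langle\psi|F^{(13)}|\psi\rangle\ge 0$.

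\textbf{Step 3 (evaluate the flip).} By \eqref{eq-4trace1} with $X_1=X_2=V_i$ and $Y_1=Y_2=V_k$,
\[
\langle\psi|F^{(13)}|\psi\rangle=\sum_{i,k=1}^2 s_is_k\Tr\bigl((V_i^*V_k)^2\bigr)
= s_1^2\|V_1^*V_1\|_2^2+s_2^2\|V_2^*V_2\|_2^2+2s_1s_2\,\re\Tr\bigl((V_1^*V_2)^2\bigr).
\]
The diagonal terms are nonnegative because $V_i^*V_i$ is positive.

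\textbf{Step 4 (control the cross term).} This is the only real obstacle, since the cross term may be negative. Its absolute value satisfies
\[
|\Tr(A^2)|\le\|A\|_2\|A^*\|_2=\|A\|_2^2, \qquad A=V_1^*V_2,
\]
and
\[
\|V_1^*V_2\|_2^2=\Tr(V_1V_1^*V_2V_2^*)\le\|V_1V_1^*\|_2\|V_2V_2^*\|_2 .
\]
Using $\|V_iV_i^*\|_2=\|V_i^*V_i\|_2$ and AM--GM,
\[
2s_1s_2\|V_1V_1^*\|_2\|V_2V_2^*\|_2\le s_1^2\|V_1^*V_1\|_2^2+s_2^2\|V_2^*V_2\|_2^2 .
\]
Therefore $\langle\psi|F^{(13)}|\psi\rangle\ge0$, which proves \eqref{eq-AntisymmIneq1}. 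The only care needed is the bookkeeping of adjoints and conjugates in applying \eqref{eq-4trace1}. The phases from the Takagi factorization are already absorbed into $v_i$, so $s_i\ge0$ is genuinely available.
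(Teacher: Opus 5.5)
Your proposal is correct and follows essentially the same route as the paper: you use the same reduction to nonnegativity of a single flip expectation (you use $F^{(13)}$ and the paper uses $F^{(24)}$, which agree on $\psi$ by symmetry), together with the same symmetric Schmidt form and trace formula. Your two Cauchy--Schwarz steps followed by AM--GM simply reprove the four-factor H\"older bound $|\Tr(V_1^*V_2V_1^*V_2)|\le\|V_1\|_4^2\|V_2\|_4^2$ and the completed square that the paper uses.
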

\begin{proof}
Since $\ket{\psi}$ is
symmetric under the exchange of the two copies $12\leftrightarrow 34$, we have
\begin{equation} \label{eq-symm-identity}
    F^{(12:34)}\ket{\psi}=F^{(13)}F^{(24)}\ket{\psi}=\ket{\psi},
    \qquad F^{(13)}\ket{\psi}=F^{(24)}\ket{\psi}.
\end{equation}
Using $\Pi_{\A}=(I_{d^2}-F_d)/2$, we obtain
\begin{align*}
    \|(\Pi_{\A}^{(13)}\otimes \Pi_{\A}^{(24)})\ket{\psi}\|^2
    &=\bra{\psi} \Pi_{\A}^{(13)}\otimes \Pi_{\A}^{(24)}\ket{\psi} \\
    &=\frac{1}{4}\Big(\braket{\psi|\psi}-\bra{\psi}F^{(13)}\ket{\psi}
        -\bra{\psi}F^{(24)}\ket{\psi}
        +\bra{\psi}F^{(13)}F^{(24)}\ket{\psi}\Big) \\
    &=\frac{1}{2}\Big(\|\psi\|^2-\bra{\psi}F^{(24)}\ket{\psi}\Big).
\end{align*}
Therefore, it suffices to prove that $\bra{\psi}F^{(24)}\ket{\psi}\geq 0$. Indeed, by Proposition \ref{prop-Schmidt-Symm} applied inside the Hilbert space $\Hi$,
one can write
$$
    \ket{\psi}=s_1\ket{v_1}_{12}\otimes\ket{v_1}_{34}
              +s_2\ket{v_2}_{12}\otimes\ket{v_2}_{34},
$$
where $s_1,s_2\geq 0$ and $\ket{v_1},\ket{v_2}$ are orthonormal vectors in
$\mathcal{H}\subseteq \Comp^d\otimes \Comp^d$; if $\dim \Hi=1$, we take $s_2=0$. Write
$\ket{v_i}=\ket{\vect V_i}$ for the matrix $V_i\in M_d(\Comp)$, and 
$$ \langle\psi|F^{(24)}|\psi\rangle
=s_1^2\operatorname{Tr}\bigl((V_1^*V_1)^2\bigr)\geq0.$$
For $\dim \Hi=2$, by Lemma \ref{lemma:4trace}
\begin{align*}
    \bra{\psi}F^{(24)}\ket{\psi}
    &=s_1^2\Tr\big((V_1^*V_1)^2\big)
      +2s_1s_2\re\Tr(V_1^*V_2V_1^*V_2)
      +s_2^2\Tr\big((V_2^*V_2)^2\big) \\
    &\geq s_1^2\|V_1\|_4^4
       -2s_1s_2\|V_1\|_4^2\|V_2\|_4^2
       +s_2^2\|V_2\|_4^4 \\
    &=\big(s_1\|V_1\|_4^2-s_2\|V_2\|_4^2\big)^2\geq 0,
\end{align*}
where we used the H\"older's inequality for Schatten norms ,
$$
    |\Tr(PQRS)|\leq \|P\|_4\|Q\|_4\|R\|_4\|S\|_4.
$$
Substituting this nonnegativity into the previous expression for
$\| (\Pi_{\A}^{(13)}\otimes \Pi_{\A}^{(24)}) \ket{\psi}\|^2$ gives \eqref{eq-AntisymmIneq1}.
\end{proof}

We now present our key symmetric-antisymmetric estimate.

\begin{theorem} \label{thm-AntisymmProj-S2norm}
Let $d\ge 2$. For every bipartite pure state $|\psi\ra \in (\Comp^d)^{\otimes 2}_{12}\otimes (\Comp^d)^{\otimes 2}_{34}$ with ${\rm SR}(\psi)\leq 2$, one has
\begin{equation} \label{eq-AntisymmIneq}
    \la \psi| (\Pi_{\A}^{(13)}\otimes \Pi_{\A}^{(24)})\ket{\psi}
    \leq \frac{1}{2}.
\end{equation}
In other words, $\|\Pi_{\A}^{(13)}\otimes \Pi_{\A}^{(24)}\|_{S(2),12:34}=\frac{1}{2}$ with the $S(2)$-norm introduced in \cite{JK10}.
\end{theorem}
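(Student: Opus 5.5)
Write $Q:=\Pi_{\A}^{(13)}\otimes \Pi_{\A}^{(24)}$ and let $\ket{\psi}$ have Schmidt rank at most two across $12:34$. The plan is to reduce the statement to Lemma~\ref{lem-AntisymmProj}. The reduction rests on two facts: $Q$ commutes with the copy-swap $F^{(12:34)}=F^{(13)}F^{(24)}$, and $Q$ annihilates the antisymmetric part with respect to this swap in a controlled way.

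\textbf{Step 1: set up the subspace.} Write the Schmidt decomposition $\ket{\psi}=s_1\ket{a_1}\ket{b_1}+s_2\ket{a_2}\ket{b_2}$ with $a_i\in(\Comp^d)^{\otimes2}_{12}$ and $b_i\in(\Comp^d)^{\otimes 2}_{34}$. Let $\K=\mathrm{span}\{a_1,a_2,b_1,b_2\}$, viewed inside a single copy of $\Comp^d\otimes\Comp^d$. This space has dimension up to $4$, which is too large for Lemma~\ref{lem-AntisymmProj} directly. So instead I would use a polarization-type trick.

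Note that $F^{(13)}F^{(24)}$ commutes with $Q$, because $F^{(13)}$ and $F^{(24)}$ commute with each other and with their own projections. Decompose $\psi=\psi_S+\psi_A$ into its symmetric and antisymmetric parts under the $12\leftrightarrow34$ swap. Then
$$\la\psi|Q|\psi\ra=\la\psi_S|Q|\psi_S\ra+\la\psi_A|Q|\psi_A\ra.$$

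For the antisymmetric part, $F^{(13)}\psi_A=-F^{(24)}\psi_A$. Repeating the computation in the proof of Lemma~\ref{lem-AntisymmProj} gives
$$\la\psi_A|Q|\psi_A\ra=\tfrac12\big(\|\psi_A\|^2-\la\psi_A|F^{(24)}|\psi_A\ra\big)\cdot 0\text{-type term}?$$
Carrying the signs through, one gets $\tfrac14\big(\|\psi_A\|^2-\|\psi_A\|^2+\dots\big)$, i.e. $\la\psi_A|Q|\psi_A\ra=0$: the terms $\la F^{(13)}\ra$ and $\la F^{(24)}\ra$ cancel each other, and $\la F^{(13)}F^{(24)}\ra=-\|\psi_A\|^2$. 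So only the symmetric part contributes, and it suffices to show $\la\psi_S|Q|\psi_S\ra\le\frac12$.

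\textbf{Step 2: the main obstacle.} The difficulty is that $\psi_S=\tfrac12(\psi+F^{(12:34)}\psi)$ can have Schmidt rank up to $4$, and it lies in $\K\vee\K$ rather than in $\Hi\vee\Hi$ with $\dim\Hi\le2$. So the norm bound $\|\psi_S\|^2\le 1$ alone is not enough.

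The route I would take is the one of \cite{PPHH10,JK10}: exploit the freedom in choosing the vector. Since $\la\psi|Q|\psi\ra$ is a supremum over rank-two vectors, I would first fix the $12$-side span $\mathrm{span}\{a_1,a_2\}$. The maximum over $b$'s of $\la\psi|Q|\psi\ra$ equals the largest eigenvalue of the compression $Q_a:=(P_a\otimes I)Q(P_a\otimes I)$, together with the analogous compression on the other side. Working in the Hilbert–Schmidt sense, this reduces to the inequality
$$\la\psi|Q|\psi\ra\le \|P\,Q\,P\|$$
with $P$ the projection onto $\Hi\otimes\Hi$, where $\Hi=\mathrm{span}\{a_1,a_2\}$ or $\mathrm{span}\{\bar b_1,\bar b_2\}$ is suitably symmetrized. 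Concretely, I would use the Cauchy–Schwarz bound
$$\la\psi|Q|\psi\ra\le \la\psi|Q|\psi\ra^{1/2}_{a\text{-side}}\,\la\psi|Q|\psi\ra^{1/2}_{b\text{-side}},$$
obtained by writing $\psi=(X\otimes I)\Omega$-type factorizations and moving $Q$ across via its swap symmetry. This bounds the original quantity by the geometric mean of two quantities, each of the form $\la\phi|Q|\phi\ra$ with $\phi\in\Hi\otimes\Hi$ and $\dim\Hi\le2$.

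\textbf{Step 3: conclude.} For vectors in $\Hi\otimes\Hi$, apply the decomposition of Step 1 again: the antisymmetric part contributes $0$, and the symmetric part lies in $\Hi\vee\Hi$, so Lemma~\ref{lem-AntisymmProj} gives the bound $\frac12\|\phi\|^2$. Combining the pieces yields the inequality \eqref{eq-AntisymmIneq}.

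Sharpness follows from taking $\psi=\ket{v}\ket{v}$, or a suitable rank-two choice with $V_1,V_2$ whose $4$-norms realize equality, so that $\la\psi|F^{(24)}|\psi\ra=0$. For example, take $v=\vect$ of a nilpotent matrix $N$ with $\Tr((N^*N)^2)=0$? That is impossible; instead use $\psi=\tfrac1{\sqrt2}(a\otimes b+b\otimes a)$ with $\Tr((A^*B)\cdots)$ arranged so that the terms cancel. The genuinely delicate point remains Step 2, namely transferring a Schmidt-rank-two vector to a vector supported on a two-dimensional $\Hi\otimes\Hi$.
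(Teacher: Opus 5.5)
Your Step 1 is sound: in fact $Q\psi_A=0$ outright, since $\operatorname{Ran}(Q)$ lies in the $+1$ eigenspace of $F^{(13)}F^{(24)}$. Step 3 also correctly gives $\la\phi|Q|\phi\ra\le\frac12\|\phi\|^2$ for every $\phi\in\Hi\otimes\Hi$ with $\dim\Hi\le 2$.

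The genuine gap is Step 2. Nothing in it reduces a general Schmidt-rank-two $\psi$ to such a vector. The Cauchy--Schwarz bound you invoke is not derived, and it is false if its two factors are evaluated on $\Hi_a\otimes\Hi_a$ and $\Hi_b\otimes\Hi_b$, where $\Hi_a=\operatorname{span}\{a_1,a_2\}$ and $\Hi_b=\operatorname{span}\{b_1,b_2\}$. Take the product vector $\psi=\ket{11}_{12}\ket{22}_{34}$. Then $\Hi_a\otimes\Hi_a=\Comp\ket{1111}$ and $\Hi_b\otimes\Hi_b=\Comp\ket{2222}$ are both annihilated by $Q$, because they are $F^{(13)}$-symmetric. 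So your right-hand side is $0$, while $\la\psi|Q|\psi\ra=\|\Pi_{\A}\ket{12}\|^4=\frac14$. Swap symmetry of $Q$ alone cannot rescue such an inequality: for $Q=\ket{\phi}\bra{\phi}$ with $\phi=(e_1\otimes e_2+e_2\otimes e_1)/\sqrt2$ and $\psi=e_1\otimes e_2$, the same failure occurs. The two-dimensional subspace cannot be built from the Schmidt vectors of $\psi$.

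The missing idea, which is how the paper proceeds, is to take the subspace from the \emph{range} of $Q$. If $Q\psi\neq0$, set $\phi:=Q\psi/\|Q\psi\|$; equivalently, write $\la\psi|Q|\psi\ra=\max_{\phi\in\operatorname{Ran}Q,\,\|\phi\|=1}|\la\phi|\psi\ra|^2$ and exchange the two maxima. Since $\phi$ is $F^{(12:34)}$-symmetric, Proposition~\ref{prop-Schmidt-Symm} gives $\phi=\sum_i s_i v_i\otimes v_i$. By \cite[Theorem 3.3]{JK10}, the largest squared overlap of $\phi$ with unit Schmidt-rank-two vectors is $s_1^2+s_2^2$. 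It is attained at the truncation $\psi_\phi\propto s_1v_1\otimes v_1+s_2v_2\otimes v_2$, which lies in $\Hi\vee\Hi$ with $\Hi=\operatorname{span}\{v_1,v_2\}$. Hence $\la\psi|Q|\psi\ra=|\la\phi|\psi\ra|^2\le|\la\phi|\psi_\phi\ra|^2=|\la\phi|Q\psi_\phi\ra|^2\le\|Q\psi_\phi\|^2\le\frac12$. The last step is Lemma~\ref{lem-AntisymmProj}, i.e.\ your Step 3 applied to $\psi_\phi$ instead of $\psi$.

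Your sharpness paragraph also never produces an extremizer. The symmetrization of the product vector above works: $\psi_0=\frac1{\sqrt2}(\ket{11}_{12}\ket{22}_{34}+\ket{22}_{12}\ket{11}_{34})$ satisfies $\la\psi_0|F^{(24)}|\psi_0\ra=0$, hence $\la\psi_0|Q|\psi_0\ra=\frac12$.
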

\begin{proof}We utilize the idea of \cite{PPHH10, JK10}.
Set $Q:=\Pi_\A^{(13)}\otimes\Pi_\A^{(24)}$ for simplicity.
Since $Q$ is an orthogonal projection, for every unit vector
$|\psi\rangle$ one has
    $$\langle\psi|Q|\psi\rangle =\|Q\psi\|^2  =\max_{\substack{\phi\in\operatorname{Ran}(Q)\\ \|\phi\|=1}} |\langle\psi|\phi\rangle|^2.$$
 Consequently, one can write
\begin{align*}
    \max_{\substack{\|\psi\|=1\\
    \operatorname{SR}_{12:34}(\psi)\leq 2}}
     \langle\psi|Q|\psi\rangle
    =
    \max_{\substack{\|\psi\|=1\\
    \operatorname{SR}_{12:34}(\psi)\leq 2}}
    \max_{\substack{\phi\in\operatorname{Ran}(Q)\\ \|\phi\|=1}}
     |\langle\psi|\phi\rangle|^2 
    =
    \max_{\substack{\phi\in\operatorname{Ran}(Q)\\ \|\phi\|=1}}
    \max_{\substack{\|\psi\|=1\\
    \operatorname{SR}_{12:34}(\psi)\leq 2}}
     |\langle\psi|\phi\rangle|^2.
\end{align*}
Fix a unit vector $\phi\in\operatorname{Ran}(Q) = (\Comp^d\wedge \Comp^d)_{13}\otimes (\Comp^d\wedge \Comp^d)_{24}$. Then $F^{(13)}\phi= F^{(24)}\phi=-\phi$, and hence
    $$F^{(12:34)}\phi=F^{(13)}F^{(24)}\phi=\phi.$$
Thus $\phi$ is symmetric across the partition $12:34$. By
Proposition \ref{prop-Schmidt-Symm}, it has a Schmidt decomposition of the form
$$
|\phi\rangle=\sum_{i=1}^{d^2}s_i
 |v_i\rangle_{12}|v_i\rangle_{34},
\qquad
s_1\geq s_2\geq\cdots\geq 0,
\qquad
\sum_i s_i^2=1.
$$
Then we have, by \cite[Theorem 3.3]{JK10},
    $$\max_{\substack{\|\eta\|=1\\ \operatorname{SR}_{12:34}(\eta)\leq 2}}|\langle\eta|\phi\rangle|^2=s_1^2+s_2^2,$$
and the equality above is attained by the optimal state
    $$|\psi_\phi\rangle := \frac{s_1|v_1\rangle_{12}|v_1\rangle_{34} +s_2|v_2\rangle_{12}|v_2\rangle_{34}} {\sqrt{s_1^2+s_2^2}}.$$
Now by taking the $2$-dimensional subspace $\Hi:=\operatorname{span}\{v_1,v_2\}$, we have $\psi_\phi\in\mathcal H^{(12)}\vee\mathcal H^{(34)}$ and $\operatorname{SR}_{12:34}(\psi_\phi)\leq 2$.
Therefore, by Lemma \ref{lem-AntisymmProj} and the condition $Q\phi=\phi$, we obtain
$$
\begin{aligned}
\max_{\substack{\|\eta\|=1\\
\operatorname{SR}_{12:34}(\eta)\leq 2}}
 |\langle\eta|\phi\rangle|^2
=|\langle\psi_\phi|\phi\rangle|^2                              
=|\langle Q\psi_\phi|\phi\rangle|^2                            
\leq \|Q\psi_\phi\|^2\|\phi\|^2                                
\leq \frac12.
\end{aligned}
$$
Taking the maximum over all unit vectors
$\phi\in\operatorname{Ran}(Q)$ proves
$
\langle\psi|Q|\psi\rangle\leq\frac12
$
for every unit vector $\psi$ with
$\operatorname{SR}_{12:34}(\psi)\leq 2$. On the other hand, the bound in Eq. \eqref{eq-AntisymmIneq} is also attained. Indeed, for a vector
    $$ |\psi_0\rangle = \frac{1}{\sqrt2}\bigl(11\rangle_{12}|22\rangle_{34} + |22\rangle_{12}|11\rangle_{34} \bigr) \in (\Comp^d)^{\otimes 4},$$
one has $\operatorname{SR}_{12:34}(\psi_0)=2$ and $Q|\psi_0\rangle = \frac{1}{\sqrt2}|a\rangle_{13}|a\rangle_{24}$ where $|a\ra = \frac{1}{\sqrt 2}(|12\rangle-|21\rangle)$. Therefore,
$$
\langle\psi_0|Q|\psi_0\rangle
=\|Q\psi_0\|^2
=\frac12,
$$
which shows the assertion $\displaystyle \|Q\|_{S(2)} = \max_{\substack{\|\psi\|=1\\\operatorname{SR}_{12:34}(\psi)\leq 2}}\langle\psi|Q|\psi\rangle = \frac12$.
\end{proof}

\section{Proof of the main theorem}\label{sec-distillability}
First of all, it suffices to prove the 2-copy undistillability of Werner State
    $$ \rho_{\alpha}=\frac{I_{d^2}+\alpha F_d}{d^2+\alpha d}$$
at the \emph{endpoint} $\alpha=-\frac12$ (see e.g., \cite[Lemma 4]{DVSS+00}). Recall that  $\rho_{\alpha}$ is $2$-copy undistillable if and only if $\la \psi|(\rho_{\alpha}^{\Gamma})^{\otimes 2}|\psi\ra \geq 0$ for all bipartite vectors $\psi\in (\Comp^d)^{\otimes 2}_{12} \otimes (\Comp^d)^{\otimes 2}_{34}$ with ${\rm SR}(\psi)\leq 2$. By the Schmidt decomposition, we may write
\begin{equation} \label{eq-bipartite-Schmidt}
    |\psi\ra = |v_1\ra_{12} \otimes |w_1\ra_{34} + |v_2\ra_{12} \otimes |w_2\ra_{34}
\end{equation}
where $v_1,v_2\in (\Comp^d)^{\otimes 2}$ are orthonormal while $w_1,w_2\in (\Comp^d)^{\otimes 2}$ are \emph{arbitrary}. We aim to show the $2$-copy undistillability of $\rho_{-1/2}$ by showing the inequality $\la \psi|(\rho_{-1/2}^{\Gamma})^{\otimes 2}|\psi\ra \geq 0$ for any choice of orthonormal pairs $(v_1,v_2)$.
The argument has an interesting geometric structure.  The Schmidt-rank-two test vector gives an operator inequality on a pair of
matrices $V=(V_1,V_2)$.  Relative to the decomposition into the radial
direction $\mathbb C V$ and its orthogonal complement, this operator
inequality is equivalent to a Schur-complement condition.  We then relate
the diagonal, off-diagonal, and transverse terms in that Schur complement
by the value, first derivative, and second variation of the Hilbert--Schmidt norm with the antisymmetric projection studied in Section \ref{sec-symm-antisymm},
 which combined with Theorem \ref{thm-AntisymmProj-S2norm} completes the endpoint argument.  

\subsection{The endpoint as a block-operator inequality}
We first provide several reductions of $2$-copy undistillability of the endpoint Werner state $\rho_{-1/2}$ in terms of operator inequalities. We shall consider $\mathcal K:=M_d(\mathbb C)\oplus M_d(\mathbb C)$ as a Hilbert space with inner product
    $$
     \langle V,W\rangle_{\mathcal K}
     :=\sum_{i=1}^2\operatorname{Tr}(V_i^*W_i),
     \qquad
     V=(V_1,V_2),\quad W=(W_1,W_2).
    $$
For fixed $V\in\mathcal K$, we associate the linear operator $S_V:\K \to \K$ as
    $$S_VW :=\left(\sum_{i=1}^2V_i^*W_i,\; \sum_{i=1}^2W_iV_i^*\right). $$
In the case $V_1,V_2$ are \emph{Hilbert--Schmidt orthonormal}, i.e., $\Tr(V_i^*V_j)=\delta_{ij}$, we have $\|V\|_{\K}^2=2$ and hence the operators
    $$P_V:=\frac12|V\rangle\langle V|, \qquad P_{V^{\perp}}:=I_{\K}-P_V$$
define orthogonal projections onto $\Comp V$ and its orthogonal complement $V^\perp=\{W\in\mathcal K:\langle V,W\rangle_{\mathcal K}=0\},$ respectively. Let us further introduce three intermediate quantities
\begin{equation} \label{eq:hgK}
\begin{aligned}
    h_V&:= \frac12\langle V,S_V^*S_VV\rangle \in \Comp,\\
    |g_V\ra&:=\frac1{\sqrt2}P_{V^\perp}S_V^*S_V\ket{V} \in V^{\perp},\\
    K_V&:=\left.P_{V^\perp}S_V^*S_VP_{V^\perp}\right|_{V^\perp} \in B(V^{\perp}).
\end{aligned}
\end{equation}

\begin{lemma}
\label{lem:endpoint-reduction}
For $d\geq2$, the following statements are equivalent.

\begin{enumerate}
\item The Werner state $\rho_{-1/2}$ is 2-copy undistillable.

\item For every Hilbert--Schmidt orthonormal pair
$V_1,V_2\in M_d(\mathbb C)$ and every
$W=(W_1,W_2)\in\mathcal K$,
\begin{equation}
 \left\|\sum_{i=1}^2V_i^*W_i\right\|_2^2
 +\left\|\sum_{i=1}^2W_iV_i^*\right\|_2^2
 \leq
 2\sum_{i=1}^2\|W_i\|_2^2
 +\frac12\left|\sum_{i=1}^2\operatorname{Tr}(V_i^*W_i)\right|^2.
 \label{eq:matrix-endpoint}
\end{equation}

\item For every Hilbert--Schmidt orthonormal pair
$V=(V_1,V_2)$,
\begin{equation}
S_V^*S_V\leq 2I_{\mathcal K}+P_V.
 \label{eq:operator-endpoint}
\end{equation}

\item For every Hilbert--Schmidt orthonormal pair
$V=(V_1,V_2)$, the quantities in \eqref{eq:hgK} satisfy
\begin{equation}
 h_V\leq3,
 \qquad
 2I_{V^\perp}-K_V\geq0,
 \qquad
 |g_V\rangle\langle g_V|
 \leq(3-h_V)(2I_{V^\perp}-K_V).
 \label{eq:schur-conditions}
\end{equation}
\end{enumerate}
\end{lemma}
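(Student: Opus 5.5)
The plan is to prove the chain $(1)\Leftrightarrow(2)\Leftrightarrow(3)\Leftrightarrow(4)$. Only the first link needs real computation; the other two are linear-algebraic reformulations.

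\emph{Step $(1)\Leftrightarrow(2)$.} Use $F_d^{\Gamma}=|\Om_d\ra\la\Om_d|$. Then $\rho_{-1/2}^{\Gamma}$ is a positive multiple of $I_{d^2}-\frac12|\Om_d\ra\la\Om_d|$ on each pair $A_jB_j$, i.e.\ on the factors $13$ and $24$. Expanding the tensor square gives, up to a positive constant,
$$(\rho_{-1/2}^\Gamma)^{\otimes2}\propto I-\tfrac12|\Om\ra\la\Om|^{(13)}-\tfrac12|\Om\ra\la\Om|^{(24)}+\tfrac14|\Om\ra\la\Om|^{(13)}\otimes|\Om\ra\la\Om|^{(24)}.$$
By \eqref{eq-bipartite-Schmidt}, every Schmidt-rank-$\le2$ vector can be written as $\psi=\sum_i|\vect V_i\ra_{12}|\vect W_i\ra_{34}$, with $V_1,V_2$ Hilbert--Schmidt orthonormal and $W_1,W_2$ arbitrary. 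Conversely, every such pair gives an admissible $\psi$. I would then compute each term using the vectorization rules, or directly via Lemma~\ref{lemma:4trace}:
\begin{itemize}
\item $\|\psi\|^2=\sum_i\|W_i\|_2^2$;
\item contracting $1$ with $3$ against $\Om$ leaves a vector on $24$ whose squared norm is $\|\sum_i V_i^{\top}W_i\|_2^2$, up to transposition and conjugation;
\item contracting $2$ with $4$ gives a squared norm of the form $\|\sum_i W_iV_i^{\top}\|_2^2$, again up to transposition and conjugation;
\item the full contraction gives $|\sum_i\Tr(V_i^{\top}W_i)|^2$.
\end{itemize}
The main bookkeeping obstacle is to get these conjugations to match \eqref{eq:matrix-endpoint} exactly. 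I expect to substitute $V_i\mapsto\overline{V_i}$, which preserves Hilbert--Schmidt orthonormality and so is a bijection on admissible pairs. If needed, I would also transpose or conjugate the $W_i$, which preserves $\sum_i\|W_i\|_2^2$ and is a bijection on $\K$. After this, $\la\psi|(\rho_{-1/2}^\Gamma)^{\otimes2}|\psi\ra\ge0$, multiplied by $2$, is exactly \eqref{eq:matrix-endpoint}. The sign convention for $\Gamma$ on $B$ only changes which factor gets transposed, so it is harmless.

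\emph{Step $(2)\Leftrightarrow(3)$.} By definition the left side of \eqref{eq:matrix-endpoint} equals $\|S_VW\|_{\K}^2=\la W,S_V^*S_VW\ra_{\K}$. Since $\la V,W\ra_{\K}=\sum_i\Tr(V_i^*W_i)$ and $P_V=\frac12|V\ra\la V|$, the right side equals $\la W,(2I_{\K}+P_V)W\ra_{\K}$. So \eqref{eq:matrix-endpoint} for all $W$ is precisely the operator inequality \eqref{eq:operator-endpoint}.

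\emph{Step $(3)\Leftrightarrow(4)$.} Write $M:=2I_{\K}+P_V-S_V^*S_V$ in block form with respect to $\K=\Comp\hat V\oplus V^\perp$, where $\hat V=V/\sqrt2$ is a unit vector.
\begin{itemize}
\item The corner entry is $\la\hat V,M\hat V\ra=3-h_V$. This is real because $S_V^*S_V\ge0$.
\item The off-diagonal block is $-P_{V^\perp}S_V^*S_V\hat V=-g_V$.
\item The lower-right block is $2I_{V^\perp}-K_V$.
\end{itemize}
A $2\times2$ block matrix $\begin{pmatrix}a&-\la g|\\-|g\ra&B\end{pmatrix}$ with scalar $a$ is positive semidefinite if and only if $a\ge0$, $B\ge0$ and $|g\ra\la g|\le aB$. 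To see this, test on vectors $(\lambda,x)$ and optimize over $\lambda$. The degenerate case $a=0$ is covered automatically: the condition then forces $g=0$, which is exactly what positivity requires. This gives \eqref{eq:schur-conditions}.
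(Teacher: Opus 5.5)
Your proposal is correct and follows the same route as the paper. The chain $(1)\Leftrightarrow(2)\Leftrightarrow(3)\Leftrightarrow(4)$ goes via $F_d^\Gamma=|\Omega_d\rangle\langle\Omega_d|$, then the identification of the two sides of \eqref{eq:matrix-endpoint} with $\|S_VW\|_{\mathcal K}^2$ and $\langle W,(2I_{\mathcal K}+P_V)W\rangle_{\mathcal K}$, then the block decomposition along $\mathbb C V\oplus V^\perp$.

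The differences are cosmetic. You absorb the conjugation by $V_i\mapsto\overline{V_i}$, which turns $V_i^{\top}$ into $V_i^*$ in all three contracted terms at once, whereas the paper parametrizes $w_i=\overline{|\vect W_i\rangle}$. You also check the Schur-complement criterion uniformly by testing on $(\lambda,x)$, where the paper splits into cases on the sign of $3-h_V$.
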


\begin{proof}
Let us take a vector $|\psi\rangle = |v_1\ra_{12}|w_1\ra_{34}+|v_2\ra_{12}|w_2\ra_{34}$ as in Eq. \eqref{eq-bipartite-Schmidt}, and we further associate
    $$|v_i\rangle=|\vect V_i\rangle=V_i\otimes I_d|\Omega_d\rangle,  \qquad |w_i\rangle=\overline{|\vect W_i\rangle}=I_d\otimes W_i|\Omega_d\rangle, \qquad i=1,2.$$
The entrywise conjugation in the second formula is a convenient
parametrization of the arbitrary vectors $w_i$ to avoid transposes in
the matrix inequality below. Note that the orthonormality of $v_1,v_2$ is
precisely
$\operatorname{Tr}(V_i^*V_j)=\delta_{ij}$. Let us also write $P_\Omega:=|\Omega_d\rangle\langle\Omega_d|$ for simplicity. Then the vectorization convention gives the contraction identities
\begin{align*}
    (I_{13}\otimes\langle\Omega_d|_{24})
    \bigl(|\vect X\rangle_{12}
       \otimes\overline{|\vect Y\rangle}_{34}\bigr)
    &=|\vect (XY^*)\rangle_{13},\\
    (\langle\Omega_d|_{13}\otimes I_{24})
    \bigl(|\vect X\rangle_{12}
       \otimes\overline{|\vect Y\rangle}_{34}\bigr)
    &=\overline{|\vect (X^*Y)\rangle}_{24}.
\end{align*}
Consequently, one can write
\begin{align*}
 \|\psi\|^2
 &=\sum_{i=1}^2\|W_i\|_2^2, & \langle\psi|P_\Omega^{(13)}\otimes P_\Omega^{(24)}|\psi\rangle
 &=\left|\sum_{i=1}^2\operatorname{Tr}(V_i^*W_i)\right|^2, \\
 \langle\psi|P_\Omega^{(13)}\otimes I_{24}|\psi\rangle &=\left\|\sum_{i=1}^2V_i^*W_i\right\|_2^2, &
 \langle\psi|I_{13}\otimes P_\Omega^{(24)}|\psi\rangle &=\left\|\sum_{i=1}^2W_iV_i^*\right\|_2^2. 
\end{align*}
Since $\rho_{-1/2}^\Gamma
 =\frac{1}{d^2-d/2}(I_{d^2}-\frac12P_\Omega),$
we have
$$
 (\rho_{-1/2}^\Gamma)^{\otimes2}
 =\frac{1}{2(d^2-d/2)^2}
 \left[
 2I-\bigl(P_\Omega^{(13)}\otimes I_{24}
                  +I_{13}\otimes P_\Omega^{(24)}\bigr)
 +\frac12 P_\Omega^{(13)}\otimes P_\Omega^{(24)}
 \right].
$$
Substitution shows that
$\langle\psi|(\rho_{-1/2}^\Gamma)^{\otimes2}|\psi\rangle\geq0$
is equivalent to \eqref{eq:matrix-endpoint}, which shows the equivalence (1) $\Leftrightarrow$ (2).

Futhermore, by definition,
$$
 \langle W,S_V^*S_VW\rangle
 =\|S_VW\|_{\mathcal K}^2
 =\left\|\sum_iV_i^*W_i\right\|_2^2
  +\left\|\sum_iW_iV_i^*\right\|_2^2,
$$
whereas
$$
 \langle W,(2I_{\mathcal K}+P_V)W\rangle
 =2\sum_i\|W_i\|_2^2
  +\frac12\left|\sum_i\operatorname{Tr}(V_i^*W_i)\right|^2.
$$
Thus (2) and (3) are equivalent.

Finally, relative to the orthogonal decomposition
$\mathcal K=\mathbb CV\oplus V^\perp$, the operator in
\eqref{eq:operator-endpoint} has block form
\begin{equation}
 2I_{\mathcal K}+P_V-S_V^*S_V
 =\begin{pmatrix}
 3-h_V&-\langle g_V|\\
 -|g_V\rangle&2I_{V^\perp}-K_V
 \end{pmatrix}.
 \label{eq:block-form}
\end{equation}
If $3-h_V>0$, the Schur-complement criterion says that
\eqref{eq:block-form} is positive if and only if
$$
 2I_{V^\perp}-K_V
 -\frac1{3-h_V}|g_V\rangle\langle g_V|\geq0.
$$
If $3-h_V=0$, positivity forces $g_V=0$, and the remaining condition
is $2I_{V^\perp}-K_V\geq0$.  If $3-h_V<0$, positivity is impossible. That proves the equivalence of (3) and (4).
\end{proof}

\subsection{The antisymmetric projection on a symmetric subspace}

The remainder of the section is devoted to proving the inequalities \eqref{eq:schur-conditions}. Specifically, we aim to provide the explicit connection between the intermediate variables $h_V,g_V$ and $K_V$ and the Hilbert--Schmidt norm of the restricted projection
    $$\Pi_{\A}^{(13)}\otimes \Pi_{\A}^{(24)}\big|_{\Hi\vee \Hi} : \Hi^{(12)}\vee \Hi^{(34)} \to (\Comp^d)^{\otimes 4}$$
of the tensor antisymmetric projection $Q:=\Pi_{\A}^{(13)}\otimes\Pi_{\A}^{(24)}$ considered in Section \ref{sec-symm-antisymm}, where $\Hi={\rm span}\{|\vect V_1\ra, |\vect V_2\ra\}$. 
Equivalently, for $V=(V_1,V_2)\in \K$, we associate an operator $U_V:\Comp^2\to (\Comp^d)^{\otimes 2}$ by linearly extending the correspondence
    $$U_V|i\ra:=|\vect V_i\ra, \quad i=1,2.$$
Note that $U_V$ is well-defined for \emph{arbitrary} $V\in \K$ with the property
    $$ \Tr(U_V^*U_W )=\langle V, W\rangle_{\mathcal{K}}, \quad V,W\in \K.$$
Furthermore, $U_V$ is an isometry whenever $(V_1,V_2)$ is an orthonormal pair, in which case the operator $U_V\otimes U_V$ further induces an isometry from $\Comp^2\vee \Comp^2$ onto $\Hi^{(12)}\vee \Hi^{(34)}$. This motivates the linear operator 
    $$\Fop_V:=
    Q(U_V\otimes U_V)\big|_{\Comp^2\vee\Comp^2}:\Comp^2\vee\Comp^2\to (\Comp^d)^{\otimes4}, \qquad V\in \K.$$
    
We first derive the general Hilbert--Schmidt inner product formula and provide the explicit connection between $\Fop_V$ and the quantity $h_V$.

\begin{lemma}[Projection kernel and zeroth-order identity]
\label{lem:projection-kernel}
For any $V,V'\in\mathcal K$,
    \begin{align}\langle\Fop_V,\Fop_{V'}\rangle_{\mathrm{HS}}
 =\frac14\Big(
 \bigl(\operatorname{Tr}(U_V^*U_{V'})\bigr)^2
 +\operatorname{Tr}\bigl((U_V^*U_{V'})^2\bigr)
 -\langle S_{V'}V,S_VV'\rangle_{\mathcal K}
 \Big).\label{eq:zero}\end{align}
 In particular, if $V_1,V_2$ are Hilbert--Schmidt orthonormal, then
 \begin{equation}
  \|\Fop_V\|_{\mathrm{HS}}^2
  =\frac{3-h_V}{2}.
  \label{eq:zeroth-variation}
\end{equation}
\end{lemma}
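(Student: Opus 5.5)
The plan is to compute $\langle\Fop_V,\Fop_{V'}\rangle_{\mathrm{HS}}=\Tr\bigl(\Pi_{\mathcal S}(U_V\otimes U_V)^*Q(U_{V'}\otimes U_{V'})\Pi_{\mathcal S}\bigr)$ directly. Here $\Pi_{\mathcal S}=(I+F_2)/2$ is the symmetric projection on $\Comp^2\otimes\Comp^2$. We expand $Q=\Pi_\A^{(13)}\otimes\Pi_\A^{(24)}=\tfrac14(I-F^{(13)}-F^{(24)}+F^{(13)}F^{(24)})$ and reduce each term using the intertwining relation
$$F^{(13)}F^{(24)}(U_{V'}\otimes U_{V'})=(U_{V'}\otimes U_{V'})F_2 .$$
This holds because $F^{(12:34)}$ just swaps the two tensor legs $12\leftrightarrow34$, and each leg is the image of one copy of $\Comp^2$. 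The relation holds for arbitrary $V'\in\K$, with no orthonormality needed.

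First, on the range of $\Pi_{\mathcal S}$ we have $F_2\Pi_{\mathcal S}=\Pi_{\mathcal S}$. Hence $F^{(13)}F^{(24)}$ acts as the identity on $(U_{V'}\otimes U_{V'})\Pi_{\mathcal S}$. Writing $F^{(24)}=F^{(13)}F^{(12:34)}$, the same argument gives $F^{(24)}(U_{V'}\otimes U_{V'})\Pi_{\mathcal S}=F^{(13)}(U_{V'}\otimes U_{V'})\Pi_{\mathcal S}$. This is exactly the mechanism of \eqref{eq-symm-identity} in Lemma \ref{lem-AntisymmProj}. Therefore
$$Q(U_{V'}\otimes U_{V'})\Pi_{\mathcal S}=\tfrac12\bigl(I-F^{(24)}\bigr)(U_{V'}\otimes U_{V'})\Pi_{\mathcal S}.$$
The inner product then splits into two pieces. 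The first is $\tfrac12\Tr\bigl(\Pi_{\mathcal S}(A\otimes A)\bigr)$ with $A:=U_V^*U_{V'}$. The identity $\Tr((A\otimes A)F_2)=\Tr(A^2)$ turns it into $\tfrac14\bigl((\Tr A)^2+\Tr(A^2)\bigr)$, which gives the first two terms of \eqref{eq:zero}.

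The second piece, which I expect to be the only real bookkeeping obstacle, is
$$-\tfrac14\sum_{i,j=1}^2\Bigl(\langle \vect V_i\otimes\vect V_j|F^{(24)}|\vect V'_i\otimes \vect V'_j\rangle+\langle \vect V_i\otimes\vect V_j|F^{(24)}|\vect V'_j\otimes \vect V'_i\rangle\Bigr).$$
It is obtained by writing $\Pi_{\mathcal S}=(I+F_2)/2$ and evaluating the trace in the basis $|ij\rangle$. By \eqref{eq-4trace2} of Lemma \ref{lemma:4trace}, the two terms equal $\Tr(V_i^*V'_iV_j^*V'_j)$ and $\Tr(V_i^*V'_jV_j^*V'_i)$ respectively.

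Separately, I expand $\langle S_{V'}V,S_VV'\rangle_\K$ from the definition $S_VW=(\sum_iV_i^*W_i,\sum_iW_iV_i^*)$. Its first component pairing gives $\sum_{i,j}\Tr(V_i^*V'_iV_j^*V'_j)$. Its second component pairing gives $\sum_{i,j}\Tr(V'_iV_i^*V'_jV_j^*)$, which is $\sum_{i,j}\Tr(V_i^*V'_jV_j^*V'_i)$ after a cyclic permutation and relabeling. This matches the double sum above and yields \eqref{eq:zero}. Care is needed with conjugate-linearity: both the HS and $\K$ inner products are antilinear in the first slot, consistent with the $V$'s appearing with adjoints.

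For \eqref{eq:zeroth-variation}, set $V'=V$ with $\Tr(V_i^*V_j)=\delta_{ij}$. Then $A=U_V^*U_V=I_2$, so $(\Tr A)^2+\Tr(A^2)=4+2=6$. Moreover $\langle S_VV,S_VV\rangle_\K=\langle V,S_V^*S_VV\rangle=2h_V$. Hence $\|\Fop_V\|_{\mathrm{HS}}^2=\tfrac14(6-2h_V)=\tfrac{3-h_V}{2}$.
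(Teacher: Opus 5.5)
Your proposal is correct and follows essentially the same route as the paper. You expand $Q$, use the copy-swap symmetry of the image of $\Comp^2\vee\Comp^2$ to reduce $Q$ to $\tfrac12(I-F^{(24)})$, evaluate the identity part via $\Pi_{\mathcal S}=(I+F_2)/2$ and the $F^{(24)}$ part via Lemma~\ref{lemma:4trace}, and match the result with $\langle S_{V'}V,S_VV'\rangle_{\K}$; the only difference is cosmetic, since you trace against $\Pi_{\mathcal S}$ in the product basis $|ij\rangle$ where the paper sums over the orthonormal basis $\xi_{11},\xi_{12},\xi_{22}$ of the symmetric subspace.
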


\begin{proof}
Let $\ket{\xi_{11}}=|11\rangle,$ $\ket{ \xi_{12}}=\frac{|12\rangle+|21\rangle}{\sqrt2}$ and $\ket{\xi_{22}}=|22\rangle$,
which form an orthonormal basis of
$\Comp^2\vee \Comp^2$. Then all the vectors
    $$\ket{\psi_{ij}}:=(U_V\otimes U_V)\ket{\xi_{ij}},
 \qquad
 \ket{\phi_{ij}}:=(U_{V'}\otimes U_{V'})\ket{\xi_{ij}}, \qquad 1\leq i\leq j\leq 2,$$
are symmetric under $F^{(13)}F^{(24)}$. Therefore, the relation \eqref{eq-symm-identity} can be applied to have
\begin{align*}
 \langle\Fop_V,\Fop_{V'}\rangle_{\mathrm{HS}}=&\sum_{1\le i\le j\le 2}\langle  \psi_{ij}| Q| \phi_{ij}\rangle\\
=&\frac14\sum_{1\le i\le j\le 2}\langle  \psi_{ij}| I-F^{(13)}-F^{(24)}+F^{(13)}F^{(24)}| \phi_{ij}\rangle \\
 =&\frac12\sum_{1\le i\le j\le 2}\langle  \psi_{ij}| I-F^{(24)}| \phi_{ij}\rangle= \frac12T_0-\frac12T_1,
\end{align*}
where $\displaystyle T_0:=\sum_{1\leq i\leq j\leq2}\langle\psi_{ij}|\phi_{ij}\rangle$ and $\displaystyle  T_1:=\sum_{1\leq i\leq j\leq2}\langle\psi_{ij}|F^{(24)}|\phi_{ij}\rangle.$
First of all, since the orthogonal projection
onto $\Comp^2\vee \Comp^2$ is $\Pi_{\mathcal{S}}^{(2)} = \sum_{1\leq i\leq j\leq 2}|\xi_{ij}\ra\la \xi_{ij}| = (I+F_2)/2$, one can simplify
\begin{align*}
    T_0=&\sum_{1\leq i\leq j\leq2}\operatorname{Tr}(\ket{\xi_{ij}}\bra{\xi_{ij}} (U_V^*U_{V'}\otimes U_V^*U_{V'}))\\
    =&\frac{1}{2}\operatorname{Tr}( (I+F_2)U_V^*U_{V'}\otimes U_V^*U_{V'})\\
    =&\frac12\left((\operatorname{Tr}U_V^*U_{V'})^2+\operatorname{Tr}((U_V^*U_{V'})^2)\right).
\end{align*}
Furthermore, for the flip contribution, the direct computation using Lemma \ref{lemma:4trace} implies that
\begin{align*}
    T_1 &= \Tr(V_1^*V_1'V_1^*V_1') + \Tr(V_2^*V_2'V_2^*V_2') \\
    & \quad +
     \frac{1}{2} \Tr(V_1^*V_1'V_2^*V_2'+ V_1^*V_2'V_2^*V_1' + V_2^*V_1'V_1^*V_2' + V_2^*V_2'V_1^*V_1')\\
    &= \frac{1}{2}\sum_{i,j=1}^2 \Tr(V_i^*V_i'V_j^*V_j') + \frac{1}{2} \sum_{i,j=1}^2 \Tr(V_i'V_i^*V_j'V_j^*) \\
    &=\frac{1}{2} \la S_{V'}V, S_V V'\ra_{\K}.
\end{align*}
Combining all the above shows the assertion \eqref{eq:zero}. The second assertion \eqref{eq:zeroth-variation} for the $V=V'$ follows from that
$U_V$ is an isometry (i.e. $U_V^*U_V = I_2$) if $(V_1,V_2)$ are orthonormal.
\end{proof}

\subsection{First and second variations}

Because $V\mapsto\Fop_V$ is quadratic, for every real $t$, we have
\begin{align*}
 \Fop_{V+tW}
 =&\Fop_V+t\,D\Fop_V[W]+t^2\Fop_W,
 \qquad\\
 D\Fop_V[W]
 :=&\left.Q(U_W\otimes U_V+U_V\otimes U_W)
   \right|_{\Comp^2\vee \Comp^2}.
\end{align*}
Equivalently, $D\Fop_V[W]$ can be naturally defined via the \emph{directional derivative} 
\begin{equation} \label{eq-DQV}
    D\Fop_V[W] = \frac{d}{dt}\Big|_{t=0}\Fop_{V+tW} = \left.Q(U_W\otimes U_V+U_V\otimes U_W)
   \right|_{\Comp^2\vee \Comp^2}.
\end{equation}
The above first-order variation allows us to identify the off-diagonal term in the Schur
complement.

\begin{lemma} [First variation] \label{lem:first-variation}
For any three pairs $V, V',W\in \K$, we have
\begin{equation} \label{eq:first-variation}
    \la \Fop_V,D\Fop_{V'}[W]\ra_{\rm HS} = \frac{1}{2}\big( \Tr(U_V^*U_{V'}) \Tr(U_V^*U_{W}) + \Tr(U_V^*U_{V'}U_V^*U_{W}) - \la S_{V'}V,S_V W\ra_{\K}  \big).
\end{equation}
 In particular, for $V=(V_1,V_2)$ a Hilbert--Schmidt orthonormal pair and for
 $W\in V^\perp$, we have
\begin{equation}
  \langle\Fop_V,D\Fop_V[W]\rangle_{\mathrm{HS}}
  =-\frac1{\sqrt2}\langle g_V,W\rangle_{\mathcal K}.
  \label{eq:first-variation-pairing}
 \end{equation}
\end{lemma}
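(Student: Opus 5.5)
The identity \eqref{eq:first-variation} follows from polarizing the projection kernel of Lemma~\ref{lem:projection-kernel}. The HS pairing is conjugate-linear in its first slot and linear in its second, and the map $V'\mapsto \Fop_{V'}$ is a quadratic form in $V'$. Hence for real $t$ we can differentiate \eqref{eq:zero} at $V'$ in the direction $W$:
\[
\langle \Fop_V, D\Fop_{V'}[W]\rangle_{\mathrm{HS}}=\frac{d}{dt}\Big|_{t=0}\langle \Fop_V,\Fop_{V'+tW}\rangle_{\mathrm{HS}}.
\]
We then differentiate each of the three terms on the right-hand side of \eqref{eq:zero}, using that $U_{V'+tW}=U_{V'}+tU_W$ is linear.

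The first term gives $\frac{d}{dt}(\Tr U_V^*U_{V'+tW})^2=2\Tr(U_V^*U_{V'})\Tr(U_V^*U_W)$. The second gives $2\Tr(U_V^*U_{V'}U_V^*U_W)$. For the third term, $S_{V'}V$ is conjugate-linear in $V'$, since $S_{V'}V=(\sum V_i'^*V_i,\sum V_iV_i'^*)$. So $\langle S_{V'}V,S_VV'\rangle_\K$ is linear in $V'$ through both factors, and it equals $\frac12\sum_{i,j}[\Tr(V_i^*V_i'V_j^*V_j')+\Tr(V_i'V_i^*V_j'V_j^*)]$. Differentiating this expression in $V'$ produces two terms. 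By cyclicity of the trace and relabeling $i\leftrightarrow j$, the two terms coincide, each equal to $\langle S_{V'}V,S_VW\rangle_\K$ in the expanded form. The derivative is therefore $2\langle S_{V'}V,S_VW\rangle_\K$. Multiplying by $\frac14$ yields \eqref{eq:first-variation}.

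The step needing most care is this symmetry check in the quartic term. It means verifying that $\sum_{i,j}\Tr(V_i^*W_iV_j^*V_j')$ and $\sum_{i,j}\Tr(V_i^*V_i'V_j^*W_j)$ agree, which they do after swapping $i,j$ and using cyclicity, and likewise for the right-multiplied terms. The cleanest route is to write everything with $A=\sum_iV_i^*X_i$ and $B=\sum_iX_iV_i^*$, so that the quartic terms become $\Tr(A_{V'}A_{W})+\Tr(B_{V'}B_{W})$, which is manifestly symmetric.

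For \eqref{eq:first-variation-pairing}, set $V'=V$ orthonormal and $W\in V^\perp$. Then $U_V^*U_V=I_2$ and $\Tr(U_V^*U_W)=\langle V,W\rangle_\K=0$, so the first term vanishes. The second term is $\Tr(U_V^*U_W)=0$ as well. We are left with $-\frac12\langle S_VV,S_VW\rangle_\K=-\frac12\langle S_V^*S_VV,W\rangle_\K$. Since $W\in V^\perp$, we may insert $P_{V^\perp}$ and obtain $-\frac12\langle P_{V^\perp}S_V^*S_VV,W\rangle=-\frac12\sqrt2\langle g_V,W\rangle=-\frac1{\sqrt2}\langle g_V,W\rangle_\K$, as claimed.
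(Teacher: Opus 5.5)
Your proposal is correct and follows the paper's proof essentially verbatim. You differentiate the kernel identity \eqref{eq:zero} along $V'+tW$, use the symmetry $\langle S_{V'}V,S_VW\rangle_{\K}=\langle S_WV,S_VV'\rangle_{\K}$ (your $\Tr(A_{V'}A_W)+\Tr(B_{V'}B_W)$ form makes this transparent), and then specialize to $V'=V$, $W\in V^\perp$. One small slip: the expanded form of $\langle S_{V'}V,S_VV'\rangle_{\K}$ is $\sum_{i,j}[\Tr(V_i^*V_i'V_j^*V_j')+\Tr(V_i'V_i^*V_j'V_j^*)]$ with no factor $\frac12$; that factor belongs to $T_1=\frac12\langle S_{V'}V,S_VV'\rangle_{\K}$ in the paper. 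The derivative $2\langle S_{V'}V,S_VW\rangle_{\K}$ that you actually use is nonetheless the correct one.
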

\begin{proof}
By the definition of $D\Fop_V$, we can proceed using Lemma \ref{lem:projection-kernel} as follows:
\begin{align*}
    &\la \Fop_V,D\Fop_{V'}[W]\ra_{\rm HS} = \frac{d}{dt}\Big|_{t=0} \la \Fop_{V}, \Fop_{V'+tW}\ra_{\rm HS}\\
    &= \frac{1}{4} \frac{d}{dt}\Big|_{t=0}\Big( \big(\Tr(U_V^*U_{V'})+ t\Tr(U_V^*U_{W})\big)^2 + \Tr((U_V^*U_{V'}+ tU_V^*U_W)^2)\\
    &\qquad\qquad\qquad- \la S_{V'}V+ tS_{W}V, S_V V' + tS_V W\ra_{\K}   \Big)\\
    &=\frac{1}{4} \big( 2\Tr(U^*_VU_{V'})\Tr(U_V^*U_W) + 2 \Tr(U_V^*U_{V'}U_V^*U_W) - \la S_{V'}V,S_V W\ra_{\K} - \la S_{W}V,S_V V'\ra_{\K}\big)\\
    &= \frac{1}{2}\left( \Tr(U_V^*U_{V'}) \Tr(U_V^*U_{W}) + \Tr(U_V^*U_{V'}U_V^*U_{W}) - \la S_{V'}V,S_V W\ra_{\K}  \right).
\end{align*}
In the last equality above, we used the relation 
    $$\la S_{V'}V, S_V W\ra_{\K} = \la S_{W}V, S_V V'\ra_{\K} = \sum_{i,j=1}^2 \big(\Tr(V_i^*V_i'V_j^*W_j) + \Tr(V_i'V_i^* W_j V_j^*)\big).$$
The second assertion  follows from \eqref{eq:first-variation} with the relations $U_V^* U_V = I_2$ and $\Tr(U_V^*U_W) = 0$.
\end{proof}

Finally, the second-order variation controls the transverse block $K_V = P_{V^{\perp}}S_V^* S_V P_{V^\perp}$.

\begin{lemma} [Second variation] \label{cor-DQV-K}
For any orthonormal pair $V=(V_1,V_2)\in \K$ and for any $W\in V^{\perp}$, we have
\begin{align}
 \|D\Fop_V[W]\|_{\mathrm{HS}}^2
 &=\frac12\Bigl(
 3\|W\|_{\mathcal K}^2
 +\|U_V^*U_W\|_2^2
 -\langle W,  S_V^*S_VW\rangle_{\mathcal K}
 -\langle S_VV,S_WW\rangle_{\mathcal K}
 \Bigr)
 \label{eq:differential-norm}\\
 &\leq\langle W,(2I_{V^\perp}-K_V)W\rangle_{\mathcal K} = 2\|W\|_{\K}^2 - \la W, S_V^* S_V W\ra_{\K}.
 \label{eq:second-variation-bound}
\end{align}
\end{lemma}
\begin{proof}
By repeating the proof argument of Lemma \ref{lem:first-variation}, together using Eq. \eqref{eq:first-variation}, we obtain
\begin{align*}
    &\|D\Fop_V[W]\|_{\rm HS}^2 = \frac{d}{dt}\Big|_{t=0}\la \Fop_{V+tW},D\Fop_{V}[W]\ra_{\rm HS} \\ 
    &= \frac{1}{2}\big(\Tr(U_V^*U_V) \Tr(U_W^*U_W) + \Tr(U_W^*U_V)\Tr(U_V^*U_W) \\
    & \qquad + \Tr(U_V^*U_V U_W^*U_W + U_W^*U_V U_V^* U_W) -\la S_V W , S_V W\ra_{\K} - \la S_V V , S_W W\ra_{\K}\big)\\
    &= \frac{1}{2}(3\|W\|_{\K}^2 + \|U_V^*U_W\|_2^2 - \la W,S_V^* S_V W\ra_{\K} - \la S_V V,S_W W\ra_{\K}),
\end{align*}
where we used the relations $U_V^*U_V = I_2$, $\Tr(U_V^*U_W)=0$, and $\Tr(U_W^*U_W)=\|W\|^2$ in the last equality. This shows Eq. \eqref{eq:differential-norm}.

It remains to prove \eqref{eq:second-variation-bound} which is equivalent to
\begin{equation}
 \|W\|_{\mathcal K}^2
 -\|U_V^*U_W\|_2^2
 -\|S_VW\|_{\mathcal K}^2
 +\langle S_VV,S_WW\rangle_{\mathcal K}
 \geq0.
 \label{eq:variation-remainder}
\end{equation}
The vector below is chosen so that its norm and its expectation under
the full copy flip reproduce the first two terms on the left-hand side
of \eqref{eq:variation-remainder}.  Let us take
    $$|\psi\ra = |\psi_{V,W}\ra
    :=|\vect V_1\ra_{12}|\vect W_2\ra_{34}
      -|\vect V_2\ra_{12}|\vect W_1\ra_{34}.$$
Obviously we have ${\rm SR}_{12:34}(\psi)\leq 2$. Since $V_1,V_2$ are orthonormal, we first have 
    $$\|\psi\|^2=\|W_1\|^2+\|W_2\|^2 = \|W\|_{\K}^2.$$
Furthermore, the action of full flip operator $F^{(12:34)}=F^{(13)}F^{(24)}$ gives rise to 
\begin{align*}
    \la\psi|F^{(13)}F^{(24)}|\psi\ra &= |\Tr(V_1^*W_2)|^2 -\Tr(V_1^*W_1) \Tr(W_2^*V_2) - \Tr(V_2^*W_2)\Tr(W_1^*V_1) + |\Tr(V_2^*W_1)|^2\\
    &= |\Tr(V_1^*W_2)|^2 + |\Tr(V_1^*W_1)|^2 + |\Tr(V_2^*W_2)|^2 + |\Tr(V_2^*W_1)|^2\\
    &= \|U_V^*U_W\|_2^2,
\end{align*}
where we applied the orthogonality condition $\Tr(W_1^*V_1)=-\Tr(W_2^*V_2)$ in the second equality. On the other hand, Lemma \ref{lemma:4trace} implies that
\begin{align*}
    \la\psi|F^{(13)}|\psi\ra &=\Tr(V_1^*W_2W_2^*V_1) + \Tr(V_2^*W_1W_1^*V_2) -\Tr(V_1^*W_1W_2^*V_2) -\Tr(V_2^*W_2W_1^*V_1),\\
    \la\psi|F^{(24)}|\psi\ra 
    &=\Tr(V_1^*V_1W_2^*W_2) + \Tr(V_2^*V_2W_1^*W_1) - \Tr(V_1^*V_2W_2^*W_1) - \Tr(V_2^*V_1W_1^*W_2),
\end{align*}
and therefore, 
\begin{align*}
    \la \psi|F^{(13)}+F^{(24)}|\psi\ra &= \sum_{i,j=1}^2\Tr(V_i^*V_iW_j^*W_j + V_iV_i^*W_jW_j^* - V_i^*W_i W_j^*V_j - W_iV_i^*V_jW_j^*)\\
    &= \la S_V V, S_W W\ra_{\K} - \la S_V W , S_V W\ra_{\K}.
\end{align*}
Consequently, Eq. \eqref{eq:variation-remainder} is further equivalent to 
    $$\la \psi|I+F^{(13)}+F^{(24)}-F^{(13)}F^{(24)}|\psi\ra\geq 0.$$
Since $I+F^{(13)}+F^{(24)}-F^{(13)}F^{(24)} = 2(I-2Q) = 2(I-2\Pi_{\A}^{(13)}\otimes \Pi_{\A}^{(24)})$, the application of Theorem \ref{thm-AntisymmProj-S2norm} concludes the proof of Eq. \eqref{eq:second-variation-bound}.
\end{proof}

We are now ready to prove our main Theorem \ref{thm-main}.
\begin{proof}[\textbf{Proof of Theorem \ref{thm-main}}]
It suffices to show the equivalent operator conditions \eqref{eq:schur-conditions} from Lemma \ref{lem:endpoint-reduction}. First of all, the relations \eqref{eq:zeroth-variation} and \eqref{eq:first-variation} in Lemmas \ref{lem:projection-kernel} and \ref{lem:first-variation} guarantee that $h_V\leq 3$ and $2I_{V^{\perp}}-K_V\geq 0$ . Now take $W\in V^\perp$. Combining 
 \eqref{eq:zeroth-variation}, \eqref{eq:first-variation}, and Lemma \ref{cor-DQV-K} with the Cauchy--Schwarz inequality gives
\begin{align*}
    |\la g_V,W\ra_{\K}|^2 &=2|\la\Fop_V,D\Fop_V[W]\ra_{\rm HS}|^2 \\
    &\leq 2\|\Fop_V\|_{\rm HS}^2 \|D\Fop_V[W]\|_{\rm HS}^2 \\&\leq (3-h_V)\la W,(2I_{V^\perp}-K_V)W\ra_{\K}.
\end{align*}
This shows the operator inequality $|g_V\ra\la g_V|\leq (3-h_V)(2I_{V^{\perp}}-K_V)$. Consequently, Lemma \ref{lem:endpoint-reduction} (4) completes the proof.
\end{proof}

\section{Summary and outlook}
\label{sec:conclusion}

We have obtained a complete analytic characterization of the 2-copy
distillability of Werner states in every local dimension. More precisely,
Theorem~\ref{thm-main} shows that the 2-copy distillability threshold
coincides exactly with the 1-copy threshold:
    $$\rho_\alpha \text{ is 2-copy distillable} \quad\Longleftrightarrow\quad \alpha<-\frac12.$$
Consequently, throughout the NPT interval $-\frac12\leq\alpha<-\frac1d,$ a second copy does not activate distillability. This settles the first
nontrivial finite-copy instance of the Werner-state
distillability problem uniformly in the dimension and removes the gap
between the previously known sufficient conditions \cite{DCLB00,RW25} and the conjectured
sharp threshold \cite{DCLB00,PPHH10,HRZ22}, even beyond the dimension $d\geq 4$. Since Werner states form the crucial test family for
the NPT distillation problem \cite{HH99}, the result provides a new exact benchmark
for understanding how negative partial transpose, Schmidt rank, and the
number of available copies interact.

As a comparison between our solution with previous approaches, we briefly mention the special case $d=4$ which has been extensively studied \cite{PPHH10,JK10,LC26}. In this case, up to a factor, the tensor power of the partial-transposed endpoint Werner state $(\rho_{-1/2}^{(4)})^{\Gamma}$ becomes
    $$(\rho_{-1/2}^{\Gamma})^{\otimes 2} \propto I_{4^2} - 2P_- = I-2\big(\om_4^{(13)}\otimes (I_{4^2}-\om_4)^{(24)} + (I_{4^2}-\om_4)^{(13)}\otimes \om_4^{(24)}\big),$$
where $\om_4:= \frac{1}{4}|\Om_4\ra\la \Om_4|$ denotes the maximally entangled pure state. Therefore, $\rho_{-1/2}^{(4)}$ is $2$-copy undistillable if and only if 
    $$\|P_-\|_{S(2),12:34} =  \sup_{\substack{\|\psi\|=1\\ {\rm SR}_{12:34}(\psi)\leq 2}}\la \psi|P_-|\psi\ra\leq \frac{1}{2}.$$
In particular, our Theorem \ref{thm-main} resolves this optimization and many other equivalent formulations such as
\begin{center}
    $\sigma_1(X)^2+\sigma_2(X)^2\leq \frac{1}{2}$, \qquad $X=A\otimes I_4+I_4\otimes B$, \quad whenever \\
    $A,B\in M_4(\Comp)$, \quad    $\Tr(A)=\Tr(B)= 0$, \quad  and \quad $\|A\|_2^2+\|B\|_2^2 = \frac{1}{4}$,
\end{center}
as considered in \cite{PPHH10,QCCS21}. Instead of showing directly the above optimization problems, our key
ingredient is the sharp, dimension-independent estimate involving the tensor antisymmetric projection
$$
 \left\|\Pi_{\mathcal A}^{(13)}\otimes
              \Pi_{\mathcal A}^{(24)}\right\|_{S(2),12:34}
 =\frac12,
$$
which converts a Schmidt-rank constraint into a geometric statement about
symmetric and antisymmetric tensor subspaces. The boundary case $\alpha=-1/2$ is
then recast as a block-operator inequality, whose diagonal,
off-diagonal, and transverse components are identified with the value,
first variation, and second variation of the restricted antisymmetric
projection. We expect this
perspective to be useful in other problems involving $S(k)$-norms \cite{JK10,JKPP11},
Schmidt-number witnesses \cite{TH00,SLB01}, and tensor powers of positive maps \cite{MHRW16}.

It is important
to stress that 2-copy undistillability is not yet NPT bound
entanglement. 
The immediate further question is the $r$-copy problem for general $r\ge 3$. At the
endpoint, this asks whether
    $$\langle\psi| \bigl(\rho_{-1/2}^{\Gamma}\bigr)^{\otimes r}  |\psi\rangle\geq 0  \qquad  \text{for every }\operatorname{SR}(\psi)\leq 2  $$
for all $r\geq3$. A positive answer for every $r$ would imply NPT bound
entanglement within the Werner family. This problem still remains open.

\bigskip

\noindent{\bf AI Statement.} The authors acknowledge the use of AI tools, including ChatGPT, for language polishing, LaTeX editing, and exploratory mathematical discussions during the development and preparation of this manuscript. Some ideas used in the proof-development process arose during interactions with GPT-5.5 and GPT-5.6 Sol. Suggestions from these AI interactions were subsequently examined, reformulated, incorporated into the manuscript, and independently verified by the authors. The authors take full responsibility for all mathematical content in the final manuscript.

\bigskip

\noindent{\bf Acknowledgments.} J. Fu, L. Gao, and S.-J. Park  are partially supported by the National Natural Science Foundation of China (grant No.~12401163) and the Department of Science and Technology of Hubei Province (Project No.2025EHA041, Project No.2025AFA044). S.-J. Park acknowledges the Hongyi Postdoc Fellowship from Wuhan University.

\bibliography{references}
\bibliographystyle{alpha}
\bigskip
\hrule
\bigskip

\end{document}